
\documentclass[10pt,twocolumn,twoside]{IEEEtran} 
\ifCLASSINFOpdf
\else
\fi

\usepackage{amsmath,amsfonts,amssymb,color}
\usepackage[dvipsnames]{xcolor}
\definecolor{winered}{rgb}{0.8,0,0}
\definecolor{myblue}{rgb}{0,0,0.8}
\usepackage{tikz}
\usepackage{amsthm}
\usepackage{empheq}

\newtheorem{problem}{Problem}

\newtheorem{definition}{Definition}
\newtheorem{theorem}{Theorem}
\newtheorem{lemma}{Lemma}

\newtheorem{remark}{Remark}
\newtheorem{assumption}{Assumption}

\usepackage{epsfig} 
\usepackage{subcaption}

\DeclareMathOperator*{\argmax}{\arg\!\max}

\usepackage{algorithm}
\usepackage{algpseudocode}


\usepackage{epsfig}
\usepackage{array}
\usepackage{multirow}
\usepackage{epstopdf}
\usepackage{tikz}
\usepackage{relsize}
\usetikzlibrary{shapes,arrows}
\usepackage[hidelinks]{hyperref}
\hypersetup{
    colorlinks=true,
    linkcolor={winered},
    citecolor={myblue}
}
\usepackage{cite}

\begin{document}
%
\title{On the Impacts of Redundancy, Diversity,  and Trust in Resilient Distributed State Estimation} 

\author{Aritra~Mitra, Faiq~Ghawash, Shreyas~Sundaram and Waseem~Abbas
\thanks{This work was supported by NSF CAREER award
1653648.}
\thanks{A. Mitra and S. Sundaram are with the School of Electrical and Computer Engineering, Purdue University, West Lafayette, IN 47907 USA. Email: {\tt \{mitra14, sundara2\}@purdue.edu}}
\thanks{F. Ghawash is with the Department of Electrical Engineering, Information Technology University, Lahore, Pakistan. Email: {\tt faiq.ghawash@itu.edu.pk}}
\thanks{W. Abbas is with the Department of Electrical Engineering and Computer Science, Vanderbilt University, Nashville, TN 37212. Email: {\tt waseem.abbas@vanderbilt.edu}}
}

\maketitle
\IEEEpeerreviewmaketitle
\begin{abstract}
     We address the problem of distributed state estimation of a linear
dynamical process in an attack-prone environment. Recent attempts to
solve this problem impose stringent redundancy requirements on the
measurement and communication resources of the network. In this paper,
we take a step towards alleviating such strict requirements by
exploring two complementary directions: (i) making a small subset of
the nodes immune to attacks, or ``trusted'', and (ii) incorporating
diversity into the network. We define graph-theoretic
constructs that formally capture the notions of redundancy, diversity,
and trust. Based on these constructs, we develop a resilient estimation
algorithm and demonstrate that even relatively
sparse networks that either exhibit node-diversity, or contain a small
subset of trusted nodes, can be just as resilient to adversarial
attacks as more dense networks. Finally, given a finite budget for
network design, we focus on characterizing the complexity of (i)
selecting a set of trusted nodes, and (ii) allocating diversity, so as
to achieve a desired level of robustness. We establish that,
unfortunately, each of these problems is NP-complete.
\end{abstract}
\section{Introduction}
The distributed state estimation problem, in its most basic form, concerns asymptotic reconstruction of the state of a dynamical process, via a group of sensor nodes interacting over a network \cite{dist3,ugrinov,kim,martins,mitraTAC,wang,han,rego,nozal}. Each node observes only a portion of the state dynamics and, hence, is reliant on local information exchanges with neighboring nodes for tracking the entire state. An underlying assumption that runs through almost all works on this topic is that the sensor nodes work \textit{collaboratively} towards the common goal of state estimation. However, the recent surge of activity devoted to the security of networked control systems suggests that this may no longer be a reasonable assumption to make. Thus, it is of prime importance to design algorithms and networks that are robust to attacks on certain parts of the system. For the specific problem under consideration, there are only a few existing methods that have attempted to address this concern. These works can be broadly classified in terms of the assumptions made on the adversarial model. For example, while  \cite{deghat,junsoo,mustafa,he} consider attack models that are limited in scope, \cite{mitra_auto,mitraAR} account for worst-case Byzantine adversarial attacks \cite{Byz}. However, allowing for sophisticated attack models comes at the expense of rather stringent requirements on the communication network topology. Specifically, the guarantees provided in \cite{mitra_auto,mitraAR} hold only when the network exhibits a sufficient amount of redundancy in both its measurement and communication resources. We are thus motivated to ask: \textit{Can one relax the redundancy requirements on the network, and yet, tolerate a worst-case attack model?} The goal of this paper is to demonstrate that this can indeed be done.

Recently, in \cite{abbas} and \cite{faiq}, two distinct ideas were proposed that depart from the conventional approach of increasing robustness through redundancy. In \cite{abbas}, the authors explored the concept of device hardening, wherein a small subset of carefully selected nodes, called \textit{trusted nodes}, were made immune to attacks. On the other hand, in \cite{faiq}, the authors exploited the fact that the components of a large-scale networked control system are typically quite \textit{diverse} in their hardware and software implementations. Such diversity, in turn, implies that the vulnerabilities of different components are not necessarily alike. The key observation here is that even if an adversary manages to breach the security of a particular type of component, its impact would remain limited to only components of that type. In the context of consensus, when the above ideas are leveraged appropriately, it has been shown that even a relatively sparse network with trusted nodes \cite{abbas}, or sufficient diversity \cite{faiq}, can still  exhibit the same functional robustness as that of a highly connected, dense network.

In light of the above developments, it is natural to ask whether the ideas of \textit{trust} and \textit{diversity} can be adapted to solve the resilient distributed state estimation problem. We note that the problem at hand differs on several counts from the typical consensus setting. Indeed, the former entails tracking the state of an external (potentially unstable) dynamical system using sensor nodes that are heterogeneous in terms of their observations, features that are not exhibited by the basic consensus problem. Consequently, while we borrow ideas from \cite{abbas} and \cite{faiq}, our techniques differ considerably from these works. The main questions of interest to us are as follows.

\begin{itemize} 
\item Can introducing trusted nodes and diversity into a sparse network alleviate the redundancy requirements needed for resilient distributed state estimation?

\item How should one choose a set of trusted nodes, and incorporate diversity, such that the resulting network is endowed with a desired level of robustness? 
\end{itemize}

In posing the above questions, our primary  motivation is to gain insights regarding the design of an attack-resilient, robust sensor network. The multitude of applications of such sensor networks, and the growing need for designing secure networked control systems, justifies the relevance of the questions asked in this paper. In this context, our main contributions are summarized as follows. 

\textbf{Contributions}: In Section \ref{sec:floc_graph_algo}, we introduce novel graph-theoretic constructs that formally capture the three facets of interest, namely redundancy, diversity,  and trust. Intuition dictates that the lack of any one of these facets should be compensated by the presence of at least one of the other two - this is an intrinsic feature of the topological properties we introduce. We then develop an attack-resilient, provably-correct filtering algorithm that exploits redundancy, diversity, and trust to enable each non-compromised node to asymptotically recover the entire state, provided the graph-theoretic conditions introduced in Section \ref{sec:floc_graph_algo} are met. 

One of the assumptions typically made while dealing with Byzantine attack models is that the number of compromised nodes is bounded in some appropriate sense \cite{broad,vaidyacons,rescons,dibaji,abbas,faiq,usevitch1,mitra_auto,mitraAR,Sundaramopt,su,Byz,mitra2018impact} - an assumption that we relax in Section \ref{sec:mono_chrom_adv}. In particular, once an adversary has managed to breach the security of a particular type of component (node), we allow it to compromise any number of nodes of that type. We show how one can account for such scenarios as long as the network is sufficiently diverse in its measurement and communication resources. In the process, we argue (see Remark \ref{rem:spoofing}) that one can employ diversity as a means to tackle spoofing attacks, where an attacker can impersonate the identities of multiple nodes. 

Finally, we turn to the problem of designing a robust network subject to cost constraints. Given a certain budget that caps the number of nodes that can be made trusted, or the amount of diversity that can be afforded, we focus on understanding (i) which nodes should be made trusted, and (ii) how one should allocate diversity, in order to achieve a desired level of robustness. In Section \ref{sec:design}, we formulate these problems as decision problems and characterize their complexity. We show that, unfortunately, each of these problems is NP-complete.

A preliminary version of this paper appeared as \cite{mitra2018impact}, where we only considered the impact of making certain nodes trusted. 
\section{Notation, Terminology, and Problem Setup}
\label{sec:model}
In this section, we formally describe the various models considered throughout the paper; subsequently, we state the problem of interest. We begin by introducing relevant notation.

\textbf{Notation:} A directed graph is denoted by $\mathcal{G} =(\mathcal{V},\mathcal{E})$, where $\mathcal{V} =\{1, \cdots, N\}$ is the set of nodes and $\mathcal{E} \subseteq \mathcal{V} \times \mathcal{V} $ represents the edges. An edge from node $j$ to node $i$, denoted by (${j,i}$), implies that node $j$ can transmit information to node $i$. The neighborhood (or in-neighborhood) of the $i$-th node is defined as $\mathcal{N}_i \triangleq \{j\,|\,(j,i) \in \mathcal{E} \}.$ A node $j$ is said to be an out-neighbor of node $i$ if $(i,j)\in\mathcal{E}$. The notation $|\mathcal{V}|$ is used to denote the cardinality of a set $\mathcal{V}$. The set of all eigenvalues (or modes) of a matrix $\mathbf{A}$ is denoted by $sp(\mathbf{A}) = \{\lambda \in \mathbb{C}\,|\,det(\mathbf{A}-\lambda\mathbf{I}) = 0\}$, and the set of all unstable eigenvalues by $\Lambda_{U}(\mathbf{A}) = \{\lambda \in sp(\mathbf{A})\,|\, |\lambda| \geq 1 \}$. The identity matrix of dimension $r$ is denoted $\mathbf{I}_r$, and $\mathbb{N}_{+}$ is used to refer to the set of all positive integers. The terms `communication graph' and `network' are used interchangeably.

\textbf{Plant and Observation Model:} Consider a linear time-invariant dynamical process
\begin{equation}
\mathbf{x}[k+1] = \mathbf{Ax}[k],
\label{eqn:plant}
\end{equation}
where $k \in \mathbb{N}$ is the discrete-time index, $\mathbf{x}[k] \in {\mathbb{R}}^n$ is the state vector, and  $\mathbf{A} \in {\mathbb{R}}^{ n \times n} $ is the system matrix. A network $\mathcal{G}=(\mathcal{V,E})$ of $N$ nodes monitor the state of this system. The $i$-th node receives a measurement of the state, given by
\begin{equation}
\mathbf{y}_{i}[k]=\mathbf{C}_i\mathbf{x}[k],
\label{eqn:Obsmodel}
\end{equation}
where $\mathbf{y}_{i}[k] \in {\mathbb{R}}^{r_i}$ and $\mathbf{C}_i \in {\mathbb{R}}^{r_i \times n}$. We define $\mathbf{C}\triangleq{\begin{bmatrix}\mathbf{C}^T_{1} \hspace{1.5mm} & \cdots & \hspace{1.5mm} \mathbf{C}^T_{N}\end{bmatrix}}^{T}$ and $\mathbf{y}[k]\triangleq {\begin{bmatrix}\mathbf{y}^T_{1}[k] \hspace{1.5mm} & \cdots & \hspace{1.5mm} \mathbf{y}^T_{N}[k]\end{bmatrix}}^T$ as the collective observation matrix, and collective measurement vector, respectively. In the standard distributed state estimation setup, each node $i$ is tasked with asymptotically recovering the entire state $\mathbf{x}[k]$. We make the basic (necessary) assumption  that the pair $(\mathbf{A},\mathbf{C})$ is detectable. However, for any given $i\in\mathcal{V}$, the pair $(\mathbf{A},\mathbf{C}_i)$ may not be detectable, thereby necessitating inter-node communications constrained by the topology of the network. 

\textbf{Diversity Model:} We capture node heterogeneity and, in particular, the fact that nodes have different vulnerabilities, by employing the notion of colors as suggested in \cite{faiq}. Specifically, let the set of colors be denoted $\Gamma=\{B_1,\ldots,B_{|\Gamma|}\}$, and let each node $i$ be assigned a unique color  $\Delta(i)$, where $\Delta(\cdot)$ is a mapping from $\mathcal{V} $ to $\Gamma$. Let the node set be partitioned accordingly as $\mathcal{V}=\{\mathcal{V}_{B_1},\ldots,\mathcal{V}_{B_{|\Gamma|}}\}.$ 

\textbf{Adversary Model:} We consider a subset $\mathcal{A} \subset \mathcal{V}$ of the nodes in the network to be adversarial; the remaining regular nodes will be denoted by the set $\mathcal{R}$.  The adversaries possess complete knowledge of the network topology, the system dynamics, and the algorithm employed by the non-adversarial nodes. They can act collaboratively, and can even transmit differing state estimates to different neighbors at the same instant of time, as per the Byzantine fault model \cite{Byz}. We require all adversarial nodes to be of the same type or color, i.e., the adversarial set is \textit{mono-chromatic}.\footnote{Our results can be easily generalized to account for a poly-chromatic adversarial set.} We do so to capture the impact of diverse node vulnerabilities: breach of a particular type of component (node) does not imply breach of the other types. We now recall the following definitions from \cite{broad} that quantify the number of adversaries in the network. 
\begin{definition} (\textbf{$f$-local set})
A set $\mathcal{C} \subset \mathcal{V}$ is \textit{$f$-local} if it contains at most $f$ nodes in the neighborhood of the other nodes, i.e., $|\mathcal{N}_i \cap \mathcal{C}| \leq f,$ $\forall i \in \mathcal{V}\setminus \mathcal{C}$.
\end{definition}
\begin{definition} (\textbf{$f$-local adversarial model}) A set $\mathcal{A}$ of adversarial nodes is \textit{$f$-locally bounded} if $\mathcal{A}$ is an $f$-local  set. 
\label{def:flocal}
\end{definition}

Within the class of mono-chromatic Byzantine adversarial models, we shall consider two sub-cases: one where the adversarial set $\mathcal{A}$ is $f$-locally bounded, and one where it is potentially not. We will refer to the former as the \textit{f-local mono-chromatic Byzantine adversary model}, and to the latter as simply the \textit{mono-chromatic Byzantine adversary model}. Each of these models has its own set of motivations and applies to different scenarios. For instance, the assumption of $f$-locality aims to account for scenarios where the adversary is resource-limited, and/or faces an increasing risk of getting detected with each component it compromises. When such considerations no longer apply, we relax the assumption of $f$-locality typically made in the literature on resilient distributed algorithms \cite{broad,vaidyacons,rescons,dibaji,abbas,faiq,usevitch1,mitra_auto,mitraAR,Sundaramopt,su,Byz,mitra2018impact}, and  allow an adversary to compromise an \textit{arbitrary} number of nodes of a particular type. Our philosophy here is as follows: once an adversary has figured out a way to breach the security of a particular type of component (node), it is in its interest to compromise more (if not all) nodes of that type, if this does not incur any additional resource or risk on its part (for example, malware and viruses).  

 Finally, let us note that the actual number and identities of the adversarial nodes are not known to the regular nodes. We do, however, assume that each regular node is aware of (i) the true color of each of its neighbors, including those that are adversarial; and (ii) the upper-bound $f$ on the number of adversaries in its neighborhood, whenever $\mathcal{A}$ is $f$-local.

\textbf{Trust Model:} We assume that a subset $\mathcal{T}\subseteq\mathcal{V}$ of nodes cannot be compromised by adversaries, i.e., $\mathcal{T}\cap\mathcal{A}=\emptyset$. Furthermore, we assume that each node is aware of the identities of its trusted neighbors. Note that when $|\Gamma|=1$, i.e., when all nodes are of the same type, we recover the setting in \cite{mitra2018impact}, where only the impact of trusted nodes was considered.  

With all the relevant models set up, we are now in position to state the problem of interest. To this end, let  $\hat{\mathbf{x}}_i[k]$ represent the estimate of $\mathbf{x}[k]$ (the state of system \eqref{eqn:plant})  maintained by node $i$. Our objective in this paper will be to study how  diversity and trust can be exploited to solve the following problem.

\begin{problem} 
\label{prob:resdist}
\textbf{(Resilient Distributed State Estimation)} Given an LTI system (\ref{eqn:plant}), a linear measurement model (\ref{eqn:Obsmodel}), and a time-invariant directed communication graph $\mathcal{G}$, design a set of state estimate update and information exchange rules such that $\lim_{k\to\infty} \Vert\hat{\mathbf{x}}_i[k]-\mathbf{x}[k]\Vert=0$, $\forall i \in \mathcal{R}$, \textit{regardless} of the actions of any $f$-local mono-chromatic set of Byzantine adversaries.\footnote{Later, in Section \ref{sec:mono_chrom_adv}, we investigate a variant of Problem \ref{prob:resdist} where the $f$-locality assumption on the adversarial model is relaxed.}
\end{problem}

\section{Resilient Distributed State Estimation under mono-chromatic Byzantine adversaries}
\label{sec:floc_graph_algo}
\subsection{Characterizing Sufficient Graph-theoretic Conditions}
\label{subsec:floc_graph}
In this section, we identify certain graph-theoretic conditions that play a key role in our proposed solution to Problem \ref{prob:resdist}. In particular, these topological conditions are sufficient to solve Problem \ref{prob:resdist} based on an approach that we develop later in Section \ref{subsec: floc_algo}. To proceed, we introduce the following notion of $(r,\Delta(\cdot),\mathcal{T})$-reachability. 

\begin{definition}($(r,\Delta(\cdot),\mathcal{T})$-\textbf{reachable set}) Consider a graph $\mathcal{G}=(\mathcal{V,E})$ with a trusted node set $\mathcal{T}$, where each node $i\in\mathcal{V}$ is assigned a color $\Delta(i)$. Then, given $r\in\mathbb{N}_{+}\cup\{\infty\}$, and a non-empty set $\mathcal{C}\subseteq\mathcal{V}$, $\mathcal{C}$ is said to be an $(r,\Delta(\cdot),\mathcal{T})$-reachable set if $\exists i\in\mathcal{C}$ satisfying at least one of the following conditions:
\begin{enumerate}
    \item[(i)] \textbf{Redundancy}: Node $i$ has at least $r$ neighbors outside $\mathcal{C}$, i.e., $|\mathcal{N}_i\setminus\mathcal{C}| \geq r.$ 
    \item[(ii)] \textbf{Diversity}: Node $i$ has at least 3 distinct colored neighbors outside $\mathcal{C}$, i.e., there exist nodes $u,v,w\in\mathcal{N}_i\setminus\mathcal{C}$, such that $\Delta(u)\neq\Delta(v)\neq\Delta(w)\neq\Delta(u)$. 
    \item[(iii)] \textbf{Trust}: Node $i$ has at least one trusted neighbor outside $\mathcal{C}$, i.e., $|\{\mathcal{N}_i\setminus\mathcal{C}\}\cap\mathcal{T}| \geq 1.$ 
\end{enumerate}
\label{defn:r_delta_tau_reachable}
\end{definition}

When $r=\infty$ (as in Section \ref{sec:mono_chrom_adv}), the above definition will correspond to that of a $(\Delta(\cdot),\mathcal{T})$ reachable set. The conditions in Defn. \ref{defn:r_delta_tau_reachable} are illustrated in Figure \ref{fig:val_con}. 

\begin{figure}[h]
    \centering
    \includegraphics[scale=1]{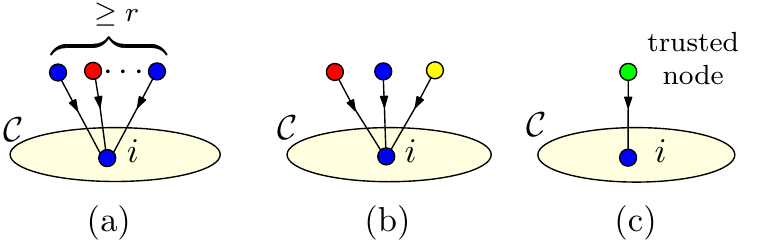}
    \caption{{Illustration of sets that satisfy $(r,\Delta(\cdot),\mathcal{T})$-reachability as per Defn. \ref{defn:r_delta_tau_reachable}, via (a) redundancy, (b) diversity, or (c) trust.}}
    \label{fig:val_con}
\end{figure}

Next, we introduce the key topological property required to solve Problem \ref{prob:resdist} based on our proposed approach. 
\begin{definition}(\textbf{strongly} $(r,\Delta(\cdot),\mathcal{T})$-\textbf{robust graph \textit{w.r.t.} $\mathcal{S}$}) Consider a graph $\mathcal{G}=(\mathcal{V,E})$ with a trusted node set $\mathcal{T}$, where each node $i\in\mathcal{V}$ is assigned a color $\Delta(i)$. Then, given $r\in\mathbb{N}_{+}\cup\{\infty\}$, and a set $\mathcal{S}\subseteq\mathcal{V}$,  $\mathcal{G}$ is strongly $(r,\Delta(\cdot),\mathcal{T})$-robust w.r.t. $\mathcal{S}$ if for all non-empty subsets  $\mathcal{C} \subseteq \mathcal{V}\setminus\mathcal{S}$, $\mathcal{C}$ is $(r,\Delta(\cdot),\mathcal{T})$-reachable.
\label{defn:strong_rdeltatau}
\end{definition}
When all nodes are of the same color, i.e., when $\Delta(i)=\Delta(j), \forall i,j\in\mathcal{V}$, and when the trusted set $\mathcal{T}$ is empty, we recover the conventional notions of $r$-reachability \cite{rescons}, and strong $r$-robustness w.r.t. a set $\mathcal{S}$ \cite{mitra_auto}, from Defn.'s  \ref{defn:r_delta_tau_reachable} and  \ref{defn:strong_rdeltatau}, respectively. 
{We note that the notion of strong $(r,\Delta(\cdot),\mathcal{T})$-robustness realizes the idea that there are multiple ways to achieve a desired level of robustness in the underlying network: by creating extra links between nodes (redundancy), or by diversifying nodes (diversity), or by hardening a subset of the nodes (trust), or by a combination of these approaches. For instance, consider the graph in Figure \ref{fig:comparison}(a), in which all nodes have the same color (no diversity), and there is no trusted node. The graph is strongly $(3,\Delta(\cdot),\mathcal{T})$-robust w.r.t. $\mathcal{S} = \{1,2,3,4,5,6\}$, where $\Delta(i) = \Delta(j)$, $\forall i\ne j$, and $\mathcal{T} = \emptyset$. We can make such a graph strongly $(6,\Delta(\cdot),\mathcal{T})$-robust w.r.t. $\mathcal{S}$ simply by adding extra links between nodes as shown in Figure \ref{fig:comparison}(b). At the same time, if we have three colors, then we can assign them to nodes such that the graph becomes strongly ($6$,$\Delta(\cdot),\mathcal{T}$)-robust w.r.t. $\mathcal{S}$, without adding extra edges or trusted nodes, as shown in Figure \ref{fig:comparison}(c). Similarly, if node 4 is a trusted node, while all the remaining nodes are of the same color, the graph again becomes strongly ($6$,$\Delta(\cdot),\mathcal{T}$)-robust w.r.t. $\mathcal{S}$, with no extra edges, as illustrated in Fig. \ref{fig:comparison}(d).}

\begin{figure}[h!]
\centering
\begin{subfigure}[b]{0.25\textwidth}
\centering
\includegraphics[scale=0.34]{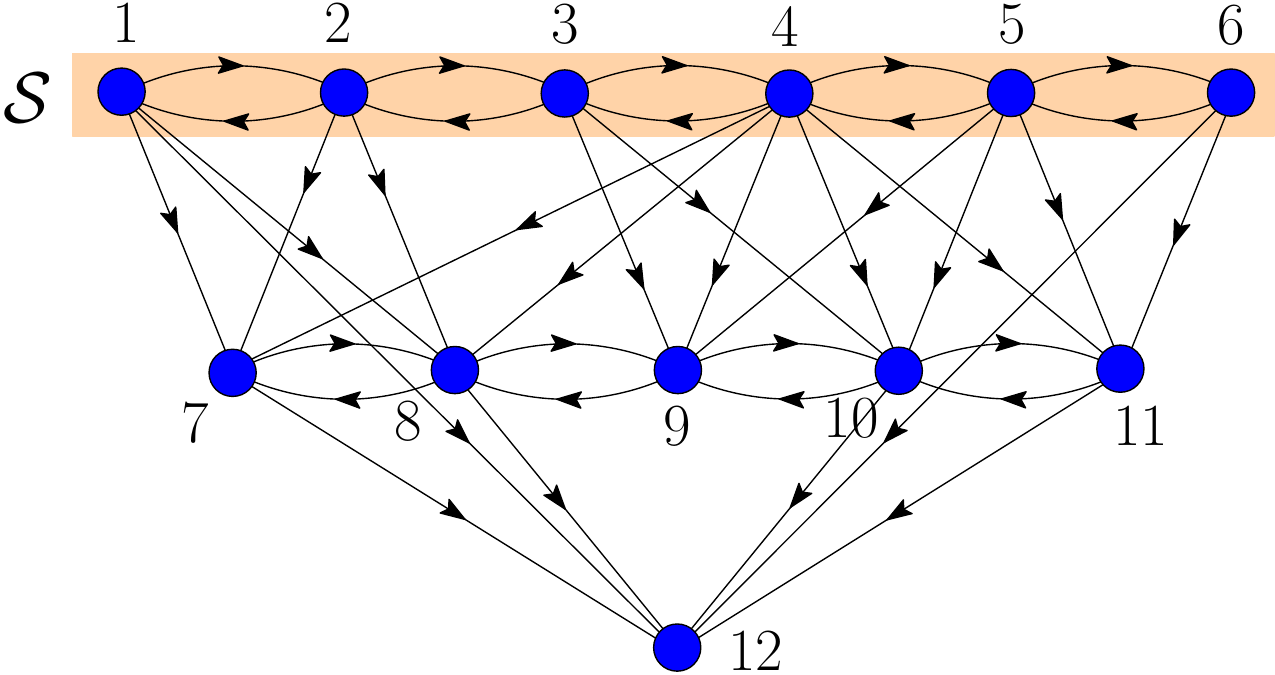}
\caption{}
\end{subfigure}
\begin{subfigure}[b]{0.22\textwidth}
\centering
\includegraphics[scale=0.34]{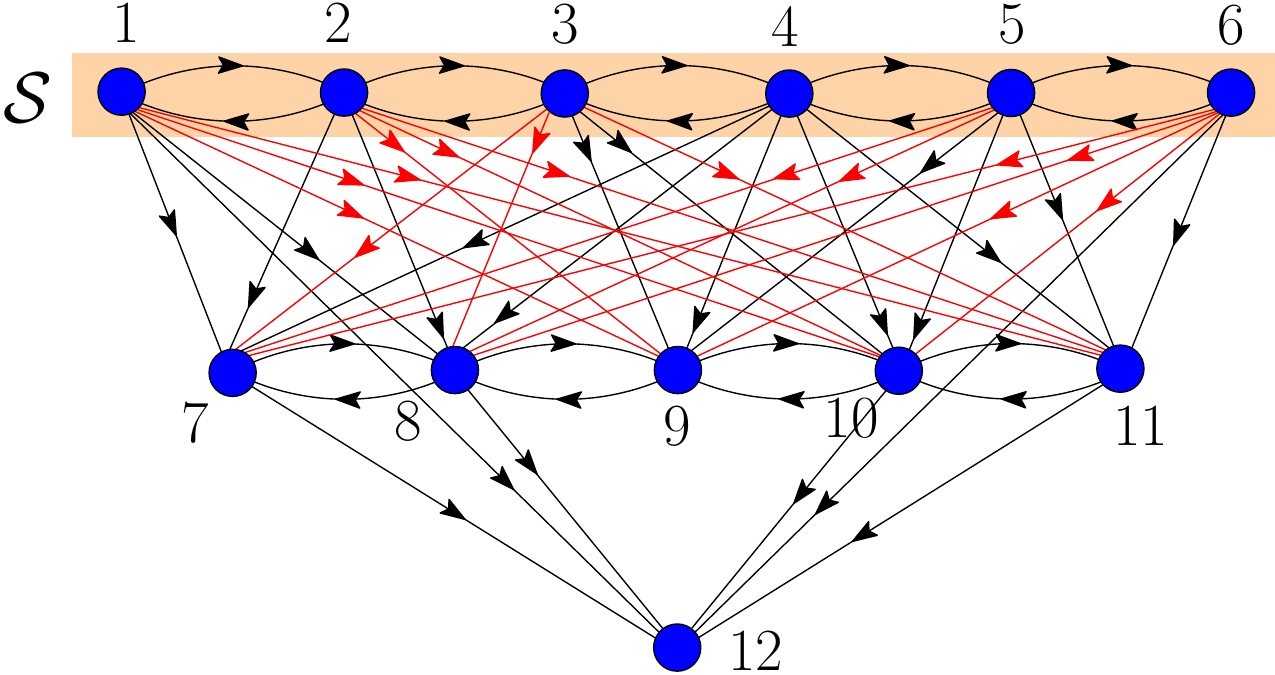}
\caption{}
\end{subfigure}
\begin{subfigure}[b]{0.25\textwidth}
\centering
\includegraphics[scale=0.34]{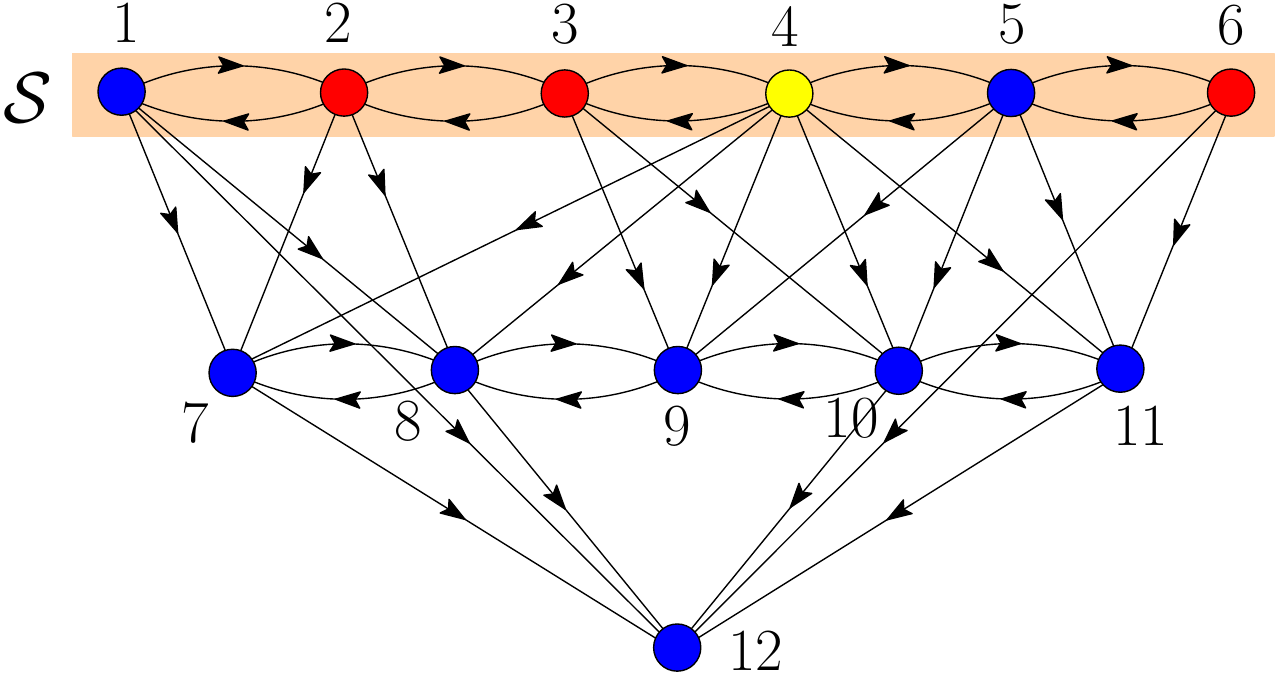}
\caption{}
\end{subfigure}
\begin{subfigure}[b]{0.22\textwidth}
\centering
\includegraphics[scale=0.34]{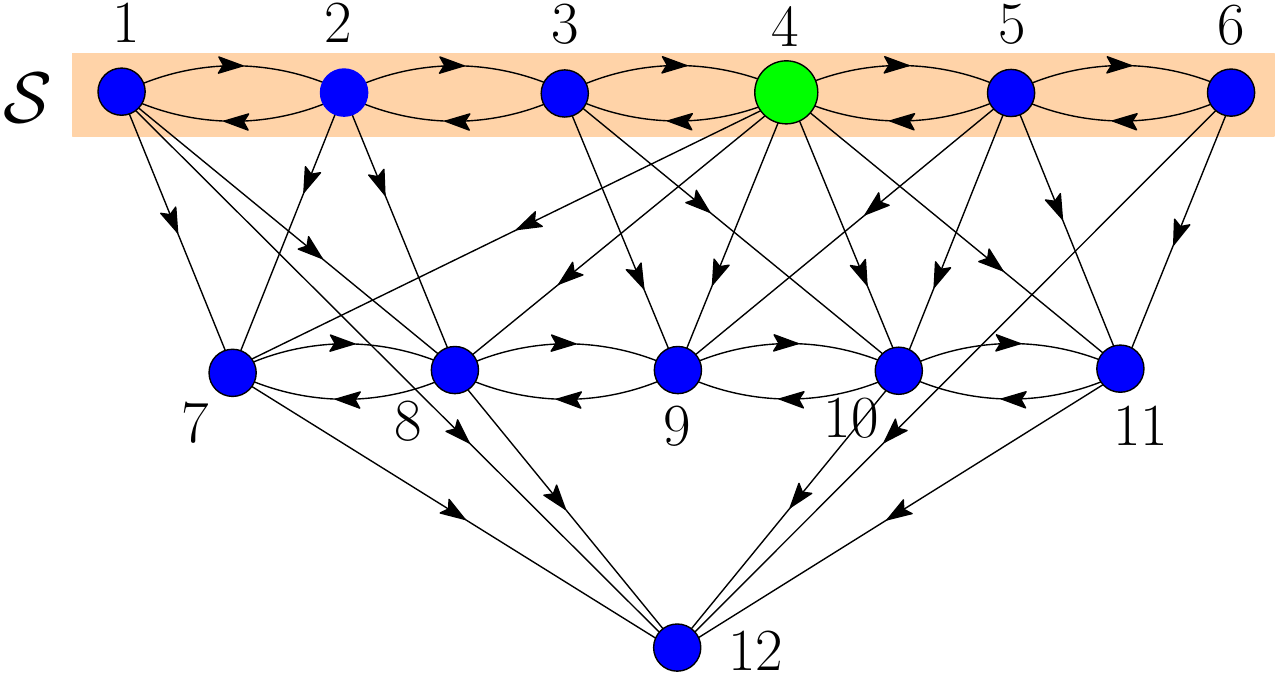}
\caption{}
\end{subfigure}
\caption{{Illustration of different approaches, including redundancy (b), diversity (c), and trust (d), to improve network robustness. The set of source nodes in all figures is $\mathcal{S} =~ \{1,2,\ldots,6\}$. Node 4 is a trusted node in Fig. \ref{fig:comparison}(d). The graphs in Figs. \ref{fig:comparison}(b), \ref{fig:comparison}(c), and  \ref{fig:comparison}(d) are all strongly $(6,\Delta(\cdot),\mathcal{T})$-robust w.r.t. $\mathcal{S}$.}}
\label{fig:comparison}
\end{figure}

Next, we recall the notion of source nodes \cite{mitra_auto}. 

\begin{definition}
(\textbf{Source nodes})
For each $\lambda_j \in \Lambda_{U}(\mathbf{A})$, let the set $\mathcal{S}_j$ be defined as follows:
\begin{equation}
\mathcal{S}_j\triangleq\{i\in\mathcal{V}|\textrm{rank}\begin{bmatrix}\mathbf{A}-\lambda_j\mathbf{I}_n\\ \mathbf{C}_i\end{bmatrix}=n \}.
\end{equation}
Then, $\mathcal{S}_j$ will be called the set of source nodes for $\lambda_j$.\footnote{In case $i\in\mathcal{S}_j$, we will say that ``node $i$ can detect $\lambda_j$". Each stable eigenvalue is considered detectable w.r.t. the measurements of every node.}
\end{definition}

\begin{figure}[t]
 \centering
\begin{subfigure}[b]{0.22\textwidth}
\centering
\includegraphics[scale=0.825]{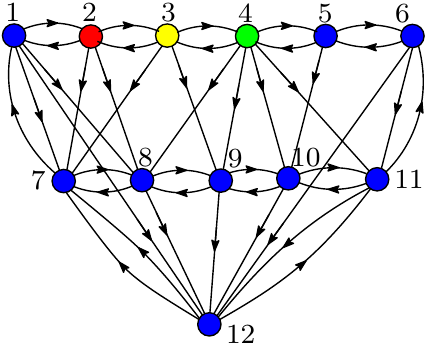}
\caption{$\mathcal{G}$}
\end{subfigure}
\begin{subfigure}[b]{0.25\textwidth}
\centering
\includegraphics[scale=0.825]{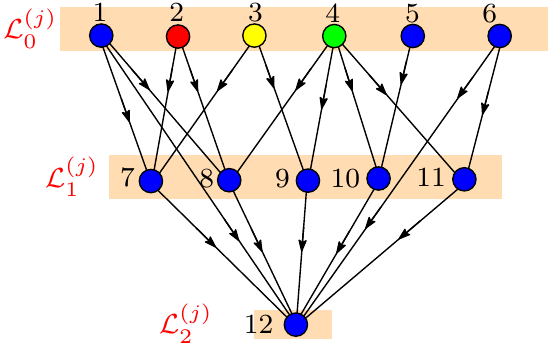}
\caption{$\mathcal{G}_j$}
\end{subfigure}
\caption{{MEDAG illustration with $f=1$. $\mathcal{S}_j = \{1,2,\ldots,6\}$ is the set of source nodes, and node $4$ is trusted. Among the non-source nodes in $\mathcal{G}_j$, node $7$ in $\mathcal{L}_1^{(j)}$ has three distinct colored neighbors in $\mathcal{L}_0^{(j)}$ (diversity condition), whereas each of the nodes in $\{8,9,10,11\}$ has a trusted neighbor in $\mathcal{L}_0^{(j)}$ (trust condition). At the same time, node 12 in $\mathcal{L}_2^{(j)}$ has more than $(2f+1)$ neighbors in $\mathcal{L}_0^{(j)}\cup \mathcal{L}_1^{(j)}$ (redundancy condition).}}
\label{fig:MEDAG2}
\end{figure}
 
Let $\Omega_{U}(\mathbf{A}) \subseteq \Lambda_{U}(\mathbf{A})$ contain the set of eigenvalues of $\mathbf{A}$ for which $\mathcal{V}\setminus{\mathcal{S}_j}$ is non-empty. Essentially, for each unstable mode $\lambda_j\in\Omega_{U}(\mathbf{A})$, the source nodes $\mathcal{S}_j$ can leverage their own local measurements to estimate the portion of the state corresponding to $\lambda_j$. However, to enable each non-source node $i\in \mathcal{V}\setminus\mathcal{S}_j$ to estimate that portion, a secure medium of information flow from $\mathcal{S}_j$ to $\mathcal{V}\setminus\mathcal{S}_j$ is necessary. To this end, the concept of a Mode Estimation Directed Acyclic Graph (MEDAG) was introduced in \cite{mitra_auto}. We now suitably modify the definition of a MEDAG to account for diversity and trust. 

\begin{definition}
\label{defn:MEDAG_r,delta,T)}
(\textbf{ $(2f+1,\Delta(\cdot),\mathcal{T})$ Mode Estimation Directed Acyclic Graph (MEDAG)}) Consider a mode $\lambda_j \in \Omega_{U}(\mathbf{A})$. Suppose there exists a spanning sub-graph $\mathcal{G}_j = (\mathcal{V},\mathcal{E}_j)$ of $\mathcal{G}$ with the following properties for all $f$-local, mono-chromatic sets $\mathcal{A}$ with $\mathcal{A}\cap\mathcal{T}=\emptyset$, and $\mathcal{R}=\mathcal{V}\setminus\mathcal{A}$.
\begin{itemize}
\item[(i)] If $i \in \{\mathcal{V}\setminus\mathcal{S}_j\} \cap \mathcal{R}$, then either $|\mathcal{N}^{(j)}_i| \geq 2f+1$; or $|\mathcal{N}^{(j)}_i\cap\mathcal{T}| \geq 1$; or $\exists u,v,w\in\mathcal{N}^{(j)}_i$ such that $\Delta(u)\neq\Delta(v)\neq\Delta(w)\neq\Delta(u)$. Here, $\mathcal{N}^{(j)}_i=\{l\in\mathcal{V}|(l,i) \in \mathcal{E}_j\}$ represents the neighborhood of node $i$ in $\mathcal{G}_j$.
\item[(ii)] There exists a partition of $\mathcal{R}$ into sets $\{\mathcal{L}^{(j)}_{0}, \ldots , \mathcal{L}^{(j)}_{T_j}\}$, where $T_j \in\{0,\ldots,N-1\}$,  $\mathcal{L}^{(j)}_{0} = \mathcal{S}_j \cap \mathcal{R}\neq\emptyset$, and if $i \in \mathcal{L}^{(j)}_q$ (where $1 \leq q \leq T_j)$, then $\mathcal{N}^{(j)}_i \cap \mathcal{R} \subseteq \bigcup^{q-1}_{r=0} \mathcal{L}^{(j)}_r$. Furthermore, $\mathcal{N}^{(j)}_i=\emptyset, \forall i \in \mathcal{L}^{(j)}_0$.
\end{itemize}
Then, we call $\mathcal{G}_j$ a  $(2f+1,\Delta(\cdot),\mathcal{T})$ MEDAG for $\lambda_j$. 
\end{definition}

In the above definition, condition (i) requires each non-source node in $\mathcal{R}$ to either have $(2f+1)$ neighbors, or a trusted neighbor, or three distinct colored neighbors in $\mathcal{G}_j$. Condition (ii) in turn states that in $\mathcal{G}_j$, the  set $\mathcal{R}$ should admit a partition into levels $\{\mathcal{L}^{(j)}_{0}, \ldots , \mathcal{L}^{(j)}_{T_j}\}$, such that a node in a particular level $q$ has neighbors in $\mathcal{R}$ from levels strictly lower than $q$, leading to an acyclic structure. An example of a $(2f+1,\Delta(\cdot),\mathcal{T})$ MEDAG is shown in Fig.  \ref{fig:MEDAG2} for $f = 1$.

\textbf{Construction of a $(2f+1,\Delta(\cdot),\mathcal{T})$ MEDAG}: We briefly discuss an algorithm that can be used to construct a $(2f+1,\Delta(\cdot),\mathcal{T})$ MEDAG (conditions for the existence of such a MEDAG will be provided below). Suppose we are given a graph $\mathcal{G}=(\mathcal{V,E})$ with a trusted node set $\mathcal{T}$, where each node $i\in\mathcal{V}$ is assigned a color $\Delta(i)$. For each $\lambda_j\in\Omega_{U}(\mathbf{A})$, our objective is to construct a sub-graph $\mathcal{G}_j$ satisfying the conditions in Defn.  \ref{defn:MEDAG_r,delta,T)} and, in the process, to identify the sets $\mathcal{N}^{(j)}_i, \, \forall i \in \mathcal{V}$. With the sets $\mathcal{N}^{(j)}_i$ in hand, one can implement the resilient distributed state estimation algorithm to be described later in Section \ref{subsec: floc_algo}. The MEDAG construction algorithm requires each node $i$ to maintain a counter $c_i(j)$ and a list of indices $\mathcal{N}^{(j)}_i$ for each $\lambda_j \in \Omega_{U}(\mathbf{A})$. These parameters are initialized with $c_i(j)=0$ and $\mathcal{N}^{(j)}_i = \emptyset$, for each $i\in\mathcal{V}$. Subsequently, the algorithm proceeds in rounds where in round zero, each node in $\mathcal{S}_j$ broadcasts the message $``1"$ to its out-neighbors, sets $c_i(j)=1$, maintains $\mathcal{N}^{(j)}_i = \emptyset$ for all future rounds, and goes to sleep. A node $i \in \mathcal{V}\setminus\mathcal{S}_j$ waits until it either receives $``1"$ from at least $(2f+1)$ distinct neighbors, or from at least three distinct colored neighbors, or from at least one
 trusted neighbor. When any one of these conditions is eventually met, it sets $c_i(j)=1$, appends the labels of each of the neighbors from which it received $``1"$ to $\mathcal{N}^{(j)}_i$, broadcasts the message $``1"$ to its out-neighbors, and goes to sleep. The MEDAG construction algorithm ``\textit{terminates for $\lambda_j$}" if there exists $T_j \in \mathbb{N}_{+}$ such that $c_i(j) =1$ $\forall i \in \mathcal{V}$, for all rounds following round $T_j$. The \textbf{objective} of the algorithm is to return a set of sets $\{\mathcal{N}^{(j)}_i\}$, where $\lambda_j \in \Omega_{U}(\mathbf{A})$, $i \in \mathcal{V}$.\footnote{Here, we do not consider adversarial behaviour during the MEDAG construction phase; however, such a possibility can be readily accounted for following arguments similar to those in \cite{mitra_auto}.} 

In the following result, we establish that the notion of strong $(r,\Delta(\cdot),\mathcal{T})$-robustness tightly characterizes the existence of a MEDAG as described in Defn.  \ref{defn:MEDAG_r,delta,T)}. In Section \ref{subsec: floc_algo}, we will demonstrate how the existence of such sub-graphs features in the synthesis of our resilient filtering algorithm. 

\begin{theorem} For each $\lambda_j \in \Omega_{U}(\mathbf{A})$, $\mathcal{G}$ contains a $(2f+1,\Delta(\cdot),\mathcal{T})$ MEDAG for $\lambda_j$ if and only if $\mathcal{G}$ is strongly $(2f+1,\Delta(\cdot),\mathcal{T})$-robust w.r.t. $\mathcal{S}_j$.
\label{thm:MEDAG_existence}
\end{theorem}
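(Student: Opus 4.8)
The plan is to prove the two implications separately, using a single breadth-first layering procedure --- essentially the MEDAG construction algorithm described above --- as the common engine. This mirrors the template relating strong $r$-robustness to MEDAG existence in the redundancy-only setting of \cite{mitra_auto}, except that the ``reachability'' test applied at each layer now admits the three alternative witnesses (redundancy, diversity, trust) of Defn.~\ref{defn:r_delta_tau_reachable}. A structural observation I would exploit throughout is that, for a \emph{fixed} candidate subgraph $\mathcal{G}_j$, condition (i) of Defn.~\ref{defn:MEDAG_r,delta,T)} is independent of the adversarial set $\mathcal{A}$ (it merely counts the number, colors, and trust-status of a node's in-neighbors in $\mathcal{G}_j$), so the universal quantifier over all $f$-local mono-chromatic $\mathcal{A}$ effectively bears only on the level-partition clause (ii), and in fact only on its non-emptiness requirement.

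For the necessity direction (MEDAG $\Rightarrow$ strong robustness), I would argue the contrapositive. Suppose $\mathcal{G}$ is not strongly $(2f+1,\Delta(\cdot),\mathcal{T})$-robust w.r.t. $\mathcal{S}_j$, so some non-empty $\mathcal{C}\subseteq\mathcal{V}\setminus\mathcal{S}_j$ fails to be $(2f+1,\Delta(\cdot),\mathcal{T})$-reachable. I would instantiate the MEDAG definition with the admissible choice $\mathcal{A}=\emptyset$, whence $\mathcal{R}=\mathcal{V}$, $\mathcal{L}^{(j)}_0=\mathcal{S}_j$, and condition (ii) yields a level partition of all of $\mathcal{V}$. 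Taking the node $i^{*}\in\mathcal{C}$ in the lowest level, the acyclicity in (ii) forces every in-neighbor of $i^{*}$ in $\mathcal{G}_j$ into strictly lower levels, hence --- by minimality of $i^{*}$ --- entirely outside $\mathcal{C}$; since $\mathcal{G}_j\subseteq\mathcal{G}$, these lie in $\mathcal{N}_{i^{*}}\setminus\mathcal{C}$. Applying condition (i) to the regular non-source node $i^{*}$ then produces either $2f+1$ such neighbors, a trusted one, or three distinctly colored ones, each certifying that $\mathcal{C}$ \emph{is} $(2f+1,\Delta(\cdot),\mathcal{T})$-reachable --- a contradiction. Note that this direction needs only the single set $\mathcal{A}=\emptyset$.

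For the sufficiency direction (strong robustness $\Rightarrow$ MEDAG), I would build $\mathcal{G}_j$ by the layering: set $\mathcal{L}_0=\mathcal{S}_j$, and inductively let $\mathcal{L}_q$ collect the as-yet-unassigned nodes possessing, within $\bigcup_{r<q}\mathcal{L}_r$, either $2f+1$ neighbors, or a trusted neighbor, or three distinctly colored neighbors; for each such node I fix $\mathcal{N}^{(j)}_i$ to be one witnessing set, and set $\mathcal{N}^{(j)}_i=\emptyset$ for $i\in\mathcal{S}_j$. The first claim is that this assigns every node of $\mathcal{V}$ to some level: if a non-empty remainder $\mathcal{C}$ survived, then at the stopping stage no node of $\mathcal{C}$ has $2f+1$ neighbors, a trusted neighbor, or three distinctly colored neighbors in $\mathcal{V}\setminus\mathcal{C}$, i.e. $\mathcal{C}\subseteq\mathcal{V}\setminus\mathcal{S}_j$ is not reachable, contradicting strong robustness. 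Given full coverage, condition (i) holds by construction for every non-source node (hence for every $\mathcal{A}$), and for condition (ii) I would simply set $\mathcal{L}^{(j)}_q=\mathcal{L}_q\cap\mathcal{R}$ for an arbitrary admissible $\mathcal{A}$, so that the acyclicity $\mathcal{N}^{(j)}_i\cap\mathcal{R}\subseteq\bigcup_{r<q}\mathcal{L}^{(j)}_r$ is inherited from $\mathcal{N}^{(j)}_i\subseteq\bigcup_{r<q}\mathcal{L}_r$.

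The point I expect to require genuine care --- the \emph{main obstacle} --- is the remaining clause of (ii), namely $\mathcal{L}^{(j)}_0=\mathcal{S}_j\cap\mathcal{R}\neq\emptyset$ for \emph{every} admissible $\mathcal{A}$; equivalently, that no $f$-local mono-chromatic $\mathcal{A}$ with $\mathcal{A}\cap\mathcal{T}=\emptyset$ can swallow the entire source set. I would derive this from strong robustness itself: if such an $\mathcal{A}\supseteq\mathcal{S}_j$ existed with $\mathcal{R}=\mathcal{V}\setminus\mathcal{A}\neq\emptyset$, then $\mathcal{R}$ would be a non-empty subset of $\mathcal{V}\setminus\mathcal{S}_j$, and every $i\in\mathcal{R}$ has all of its out-of-$\mathcal{R}$ neighbors inside $\mathcal{A}$. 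But $f$-locality caps those at $f<2f+1$ (ruling out redundancy), $\mathcal{A}\cap\mathcal{T}=\emptyset$ precludes a trusted witness, and mono-chromaticity of $\mathcal{A}$ leaves at most one color (ruling out diversity); thus $\mathcal{R}$ would be non-reachable, contradicting strong robustness. This is exactly where the interplay of the three adversary restrictions is indispensable. The degenerate case $\mathcal{A}=\mathcal{V}$ (forcing $\mathcal{T}=\emptyset$ and a single global color) leaves no regular node and is excluded by the standing non-degeneracy of the estimation setting. Assembling these pieces closes both implications and establishes the equivalence in Theorem~\ref{thm:MEDAG_existence}.
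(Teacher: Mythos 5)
Your proposal is correct and follows essentially the same route as the paper: for sufficiency, a layered activation (the MEDAG construction rounds) whose completion is forced by reachability of any surviving remainder, together with the same $f$-local/mono-chromatic/trust-disjoint argument showing no admissible $\mathcal{A}$ can contain all of $\mathcal{S}_j$; for necessity, instantiating $\mathcal{A}=\emptyset$ and examining a node of $\mathcal{C}$ in the lowest level of the partition. The only differences are cosmetic (contrapositive vs.\ contradiction phrasing, and fixing a witnessing subset for $\mathcal{N}^{(j)}_i$ rather than all activated neighbors).
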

\begin{proof}
``$\Longleftarrow$" Consider any $\lambda_j \in \Omega_{U}(\mathbf{A})$, and suppose that $\mathcal{G}$ is strongly $(2f+1,\Delta(\cdot),\mathcal{T})$-robust w.r.t. $\mathcal{S}_j$. We argue that the MEDAG construction algorithm described in this section terminates and, upon termination, returns a set of neighbor relations  $\{\mathcal{N}^{(j)}_i\}$ that induce a sub-graph $\mathcal{G}_j$ satisfying each of the two properties outlined in Defn. \ref{defn:MEDAG_r,delta,T)}. To this end, let the set of nodes that get ``activated" during the $q$-th round of the MEDAG construction algorithm be denoted $\mathcal{C}^{(j)}_q$, where we say that a node $i$ is activated as soon as it sets $c_i(j)$ to 1. Then, based on Defn.'s \ref{defn:r_delta_tau_reachable} and \ref{defn:strong_rdeltatau}, it is easy to see that $\mathcal{C}^{(j)}_0=\mathcal{S}_j$ is non-empty. By way of contradiction, suppose the MEDAG construction algorithm does not terminate. This implies the existence of a non-empty set $\mathcal{P}\subseteq\mathcal{V}\setminus{\mathcal{S}}_j$ of nodes that never get activated. However, since $\mathcal{P}$ is $(2f+1,\Delta(\cdot),\mathcal{T})$-reachable, each node $i\in\mathcal{P}$ must have received ``1" from either $(2f+1)$ nodes outside $\mathcal{P}$, or from 3 distinct colored nodes outside $\mathcal{P}$, or from a trusted node outside $\mathcal{P}$, causing it to get activated. This leads to the desired contradiction, and we conclude that all nodes must get activated eventually. It is easy to see that such an activation process can take at most $N-1$ rounds, since each new round activates at least one new node until the time all nodes get activated. Thus, there must exist some $T_j \leq N-1$ such that $\bigcup^{T_j}_{q=0}\mathcal{C}^{(j)}_q=\mathcal{V}$. Now consider any $f$-local, mono-chromatic set $\mathcal{A}$ satisfying $\mathcal{A}\cap\mathcal{T}=\emptyset$, and let $\mathcal{R}=\mathcal{V}\setminus\mathcal{A}$. For each $q\in\{0,\ldots,T_j\}$, define $\mathcal{L}^{(j)}_q\triangleq\mathcal{C}^{(j)}_q\cap\mathcal{R}$. Since $\{\mathcal{C}^{(j)}_q\}^{T_j}_{q=0}$ partitions $\mathcal{V}$, $\{\mathcal{L}^{(j)}_q\}^{T_j}_{q=0}$ partitions $\mathcal{R}$. Since upon activation, a node goes to sleep and does not listen to nodes that get subsequently activated, we have that for any $q\in\{1,\ldots,T_j\}$, if $i\in\mathcal{C}^{(j)}_q$, then $\mathcal{N}^{(j)}_i\subseteq\bigcup_{r=0}^{q-1}\mathcal{C}^{(j)}_r$. Thus, if $i\in\mathcal{L}^{(j)}_q$, then $\mathcal{N}^{(j)}_i\cap\mathcal{R}\subseteq\bigcup_{r=0}^{q-1}\mathcal{L}^{(j)}_r$. To verify property (ii) in Defn. \ref{defn:MEDAG_r,delta,T)}, it remains to argue that $\mathcal{L}^{(j)}_0=\mathcal{S}_j\cap\mathcal{R}\neq\emptyset$. Assume to the contrary that $\mathcal{S}_j\cap\mathcal{R}=\emptyset$, i.e., $\mathcal{S}_j\subseteq\mathcal{A}$. Thus, $\mathcal{R}\subseteq\mathcal{V}\setminus\mathcal{S}_j$, and at the same time $\mathcal{R}$ is non-empty since $\mathcal{A}$ is $f$-local (see Defn. \ref{def:flocal}). Since $\mathcal{G}$ is strongly $(2f+1,\Delta(\cdot),\mathcal{T})$-robust w.r.t. $\mathcal{S}_j$, it must then be that $\mathcal{R}$ is $(2f+1,\Delta(\cdot),\mathcal{T})$-reachable - a condition that is impossible to satisfy given the fact that $\mathcal{A}$ is $f$-local, mono-chromatic and $\mathcal{A}\cap\mathcal{T}=\emptyset.$ This leads to the desired contradiction, establishing property (ii) in Defn. \ref{defn:MEDAG_r,delta,T)}. Now consider any $i\in\{\mathcal{V}\setminus\mathcal{S}_j\} \cap \mathcal{R}$, and note that it must belong to some $\mathcal{L}^{(j)}_q$, where $q\in\{1,\ldots,T_j\}$. Thus, it must get activated at some point, and property (i) in Defn. \ref{defn:MEDAG_r,delta,T)} follows by simply noting the conditions for activation of a node in the MEDAG construction algorithm. 

``$\Longrightarrow$" We prove necessity via contradiction. Given some $\lambda_j\in\Omega_{U}(\mathbf{A})$, let there exist a  sub-graph $\mathcal{G}_j$ satisfying the two properties in Defn. \ref{defn:MEDAG_r,delta,T)}. Suppose $\mathcal{G}$ is not strongly $(2f+1,\Delta(\cdot),\mathcal{T})$-robust w.r.t. $\mathcal{S}_j$. Thus, there exists a non-empty set $\mathcal{C} \subseteq \mathcal{V}\setminus\mathcal{S}_j$ that is not $(2f+1,\Delta(\cdot),\mathcal{T})$-reachable. Consider the trivial $f$-local set $\mathcal{A}=\emptyset$ that satisfies $\mathcal{A} \cap \mathcal{T}=\emptyset$.\footnote{Here, we adhere to the convention that an empty set is mono-chromatic.} The sub-graph $\mathcal{G}_j$ must contain a partition of $\mathcal{R}=\mathcal{V}\setminus\mathcal{A}=\mathcal{V}$ into sets $\{\mathcal{L}^{(j)}_q\}^{T_j}_{q=0}$ that satisfy property (ii) in Defn. \ref{defn:MEDAG_r,delta,T)}. Accordingly, let $\mathcal{C}$ get partitioned as  $\mathcal{C}=\bigcup_{q=1}^{T_j} \mathcal{F}_q$, where $\mathcal{F}_q=\mathcal{C}\cap\mathcal{L}^{(j)}_{q}$ (note that $\mathcal{C}\cap\mathcal{L}^{(j)}_0=\emptyset$). Let $p$ be the smallest integer such that $\mathcal{F}_p$ is non-empty. Then, from property (ii) in Defn. \ref{defn:MEDAG_r,delta,T)}, it follows that for any $i \in \mathcal{F}_{p}$, $\mathcal{N}^{(j)}_i$ contains elements from only $\mathcal{V}\setminus\mathcal{C}$. However, as $\mathcal{C}$ is not $(2f+1,\Delta(\cdot),\mathcal{T})$-reachable, $\mathcal{N}^{(j)}_i$ violates each of the three conditions in property (i) of Defn. \ref{defn:MEDAG_r,delta,T)}, leading to the desired contradiction. 
\end{proof}

\subsection{Algorithm and Analysis for $f$-local Mono-chromatic Byzantine Adversaries}
\label{subsec: floc_algo}

In this section, we develop an algorithm that leverages node-diversity and  trusted nodes to solve Problem \ref{prob:resdist}. For clarity of exposition, we make the following assumption on the system matrix $\mathbf{A}$.
\begin{assumption}
$\mathbf{A}$ has real, distinct eigenvalues.
\label{ass:system}
\end{assumption}
Although the above assumption might seem restrictive, the results that we derive subsequently can be generalized to account for system matrices with arbitrary spectrum using a more detailed technical analysis as in \cite{mitra_auto}. Since any $\mathbf{A}$ satisfying Assumption \ref{ass:system} can be diagonalized via an appropriate similarity transformation, we assume without loss of generality that $\mathbf{A}$ is already in diagonal form. Specifically, suppose $\mathbf{A}=diag(\lambda_1,\cdots,\lambda_n)$, where $sp(\mathbf{A})=\{\lambda_1,\ldots,\lambda_n\}$. Let the component of the state vector $\mathbf{x}[k]$ corresponding to eigenvalue $\lambda_j$ be denoted by $x^{(j)}[k]$. Building on the general idea developed in \cite{mitra_auto}, for each $\lambda_j \in \Omega_{U}(\mathbf{A})$, the source nodes $\mathcal{S}_j$ and the non-source nodes $\mathcal{V}\setminus\mathcal{S}_j$ employ separate update rules for estimating $x^{(j)}[k]$. In particular, the source nodes maintain local\footnote{Here, by `local',  we imply that such observers can be constructed and run without any information from neighbors.} Luenberger observers for estimating $x^{(j)}[k]$, while the non-source nodes rely on a resilient consensus based protocol to achieve this task. For any node $i$, let the set of eigenvalues it can detect be denoted by $\mathcal{O}_i$, and let $\bar{\mathcal{O}}_i=sp(\mathbf{A})\setminus\mathcal{O}_i$. Then, the following result from \cite{mitra_auto} states that node $i$ can estimate the components of $\mathbf{x}[k]$ corresponding to the eigenvalues in $\mathcal{O}_i$, (i.e., the locally detectable portion of $\mathbf{x}[k]$) \textit{without} interacting with its neighbors. 

\begin{lemma}
Suppose Assumption \ref{ass:system} holds. Then, for each $i\in\mathcal{R}$, a local Luenberger observer can be constructed that ensures $\lim_{k\to\infty}|\hat{x}^{(j)}_i[k]-x^{(j)}[k]|=0, \forall \lambda_j \in \mathcal{O}_{i}$,  where $\hat{x}^{(j)}_i[k]$ denotes the estimate of $x^{(j)}[k]$ maintained by node $i$.
\label{lemma:luen}
\end{lemma}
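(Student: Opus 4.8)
The plan is to construct, for each regular node $i$, a standard Luenberger observer driven \emph{solely} by node $i$'s own measurement sequence $\{\mathbf{y}_i[k]\}$, and to show that the associated error dynamics are asymptotically stable along the coordinates indexed by $\mathcal{O}_i$. The whole argument hinges on exploiting the diagonal structure of $\mathbf{A}$ afforded by Assumption \ref{ass:system}, which reduces each per-mode detectability test to a simple check on the columns of $\mathbf{C}_i$.

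First I would characterize $\mathcal{O}_i$ explicitly. Since $\mathbf{A}=\mathrm{diag}(\lambda_1,\ldots,\lambda_n)$ has distinct eigenvalues, the null space of $\mathbf{A}-\lambda_j\mathbf{I}_n$ is one-dimensional (the $j$-th coordinate axis), so the rank condition defining a source node collapses to the statement that, for $\lambda_j\in\Lambda_{U}(\mathbf{A})$, one has $\lambda_j\in\mathcal{O}_i$ if and only if the $j$-th column of $\mathbf{C}_i$ is nonzero; combined with the convention that every stable eigenvalue lies in $\mathcal{O}_i$, this means the modes in $\bar{\mathcal{O}}_i$ are precisely the unstable, unobservable ones, each of which has a zero column in $\mathbf{C}_i$.

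Next I would reorder coordinates into a detectable block $\mathbf{x}_d$ (indices in $\mathcal{O}_i$) and an undetectable block $\mathbf{x}_u$ (indices in $\bar{\mathcal{O}}_i$), so that $\mathbf{A}=\mathrm{diag}(\mathbf{A}_d,\mathbf{A}_u)$ and $\mathbf{C}_i=[\mathbf{C}_{i,d}\;\;\mathbf{C}_{i,u}]$. By the previous step $\mathbf{C}_{i,u}=\mathbf{0}$, so node $i$'s output decouples from the undetectable modes: $\mathbf{y}_i[k]=\mathbf{C}_{i,d}\mathbf{x}_d[k]$. The key claim is that the reduced pair $(\mathbf{A}_d,\mathbf{C}_{i,d})$ is detectable: applying the same PBH argument within the subsystem, an eigenvalue of $\mathbf{A}_d$ is unobservable iff its column in $\mathbf{C}_{i,d}$ vanishes, which by construction can occur only for a stable mode, since the observable unstable modes were placed in $\mathbf{x}_d$ precisely because their columns are nonzero. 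Hence every unobservable mode of the reduced pair is stable, i.e., the pair is detectable.

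Finally I would invoke standard LTI observer theory: detectability of $(\mathbf{A}_d,\mathbf{C}_{i,d})$ guarantees a (real) gain $\mathbf{L}_{i,d}$ rendering $\mathbf{A}_d-\mathbf{L}_{i,d}\mathbf{C}_{i,d}$ Schur stable. Defining the local observer $\hat{\mathbf{x}}_{i,d}[k+1]=\mathbf{A}_d\hat{\mathbf{x}}_{i,d}[k]+\mathbf{L}_{i,d}(\mathbf{y}_i[k]-\mathbf{C}_{i,d}\hat{\mathbf{x}}_{i,d}[k])$ and substituting $\mathbf{y}_i[k]=\mathbf{C}_{i,d}\mathbf{x}_d[k]$, the error $\mathbf{e}_{i,d}[k]=\hat{\mathbf{x}}_{i,d}[k]-\mathbf{x}_d[k]$ obeys the autonomous recursion $\mathbf{e}_{i,d}[k+1]=(\mathbf{A}_d-\mathbf{L}_{i,d}\mathbf{C}_{i,d})\mathbf{e}_{i,d}[k]$, so $\mathbf{e}_{i,d}[k]\to\mathbf{0}$; reading off components then yields $\hat{x}^{(j)}_i[k]\to x^{(j)}[k]$ for every $\lambda_j\in\mathcal{O}_i$. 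The main obstacle -- and the only place Assumption \ref{ass:system} is genuinely needed -- is establishing detectability of the reduced pair together with the clean decoupling of the output from $\bar{\mathcal{O}}_i$; both become transparent once the diagonal, distinct-eigenvalue structure reduces the per-mode PBH test to a nonzero-column check. With repeated or complex eigenvalues this step would instead require the more careful Kalman-decomposition bookkeeping of \cite{mitra_auto}.
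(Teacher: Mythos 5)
Your proposal is correct, and it fills in exactly the argument the paper leaves implicit: the paper states Lemma \ref{lemma:luen} as an imported result from \cite{mitra_auto} and gives no proof of its own, so there is nothing to diverge from. Your route --- using the diagonal, distinct-eigenvalue structure to reduce the PBH test to a nonzero-column check, splitting off the detectable block with $\mathbf{C}_{i,u}=\mathbf{0}$, and invoking detectability of $(\mathbf{A}_d,\mathbf{C}_{i,d})$ to get a Schur error matrix --- is the standard construction that the cited reference relies on, and it correctly handles the stable-but-unobservable modes that the detectability convention places in $\mathcal{O}_i$.
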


In what follows, we develop a filtering algorithm that allows each regular node to estimate the locally undetectable portion of the dynamics, despite the potential presence of adversarial nodes in its neighborhood. 
The proposed filtering algorithm, adapted to account for node-diversity and the presence of trusted nodes, involves the following steps. 

For each $\lambda_j\in\bar{\mathcal{O}}_i$, $i\in \mathcal{R}$ updates $\hat{x}^{(j)}_i[k]$ as follows.
\begin{itemize}
\item[1)] At each time-step $k$, node $i$ collects estimates of $x^{(j)}[k]$ received from \textit{only} those neighbors that belong to $\mathcal{N}^{(j)}_i \subseteq \mathcal{N}_i$. Recall that $\mathcal{N}^{(j)}_i$ represents neighbors of node $i$ in the MEDAG $\mathcal{G}_j$ (see Definition \ref{defn:MEDAG_r,delta,T)}).

\item [2)] If $\mathcal{N}^{(j)}_i \cap\mathcal{T} \neq \emptyset$, then $\hat{x}^{(j)}_i[k]$ is updated as follows:
\begin{equation}
\hat{x}^{(j)}_i[k+1]=\lambda_j\left(\sum_{l \in \mathcal{N}^{(j)}_i\cap\mathcal{T}}\bar{w}^{(j)}_{il}\hat{x}^{(j)}_{l}[k]\right),
\label{eqn:LFRE_1}
\end{equation}
where the weights $\bar{w}^{(j)}_{il}$ are non-negative and chosen to satisfy $\sum_{l \in \mathcal{N}^{(j)}_i\cap\mathcal{T}}\bar{w}^{(j)}_{il}=1$.

\item [3)] If $\mathcal{N}^{(j)}_i \cap\mathcal{T} = \emptyset$, but there exist three distinct colored nodes in $\mathcal{N}^{(j)}_i$, then node $i$ sorts the estimates of $x^{(j)}[k]$ received from $\mathcal{N}^{(j)}_i$ in descending order. Upon such sorting, let the indices of the nodes in $\mathcal{N}^{(j)}_i$ be $\{n_1,\ldots,n_{|\mathcal{N}^{(j)}_i|}\}$, i.e., $\hat{x}^{(j)}_{n_1}[k] \geq \hat{x}^{(j)}_{n_2}[k] \ldots \geq \hat{x}^{(j)}_{n_{|\mathcal{N}^{(j)}_i|}}[k].$\footnote{Here, we have suppressed the dependence of the indices $n_p$ on $i,j$ and $k$ for clarity of exposition.} Define $m\triangleq \min\{p:\Delta(n_p)\neq\Delta(n_1)\},$ and $M\triangleq \max\{p:\Delta(n_p)\neq\Delta(n_{|\mathcal{N}^{(j)}_i|})\}$. It can be easily verified that, when $\mathcal{N}^{(j)}_i$ contains at least 3 distinct colored nodes, we have $M\geq m$. Accordingly, let $\mathcal{R}^{(j)}_i[k]=\cup_{p=m}^{M} n_{p}$. Then, $\hat{x}^{(j)}_i[k]$ is updated as follows:
\begin{equation}
\hat{x}^{(j)}_i[k+1]=\lambda_j\left(\sum_{l \in \mathcal{R}^{(j)}_i[k]}\tilde{w}^{(j)}_{il}[k]\hat{x}^{(j)}_{l}[k]\right),
\label{eqn:LFRE_2}
\end{equation}
where the weights $\tilde{w}^{(j)}_{il}[k]$ are non-negative and chosen to satisfy $\sum_{l\in\mathcal{R}^{(j)}_i[k]}\tilde{w}^{(j)}_{il}[k]=1$.\footnote{In  words, from each end, node $i$ keeps rejecting estimates until it encounters a node with color different from that of the node with the most extreme estimate on that end. See Fig. \ref{fig:LFRE}(b) for an illustration of this step.}

\item [4)] If $\mathcal{N}^{(j)}_i \cap\mathcal{T} = \emptyset$, and node $i$ does not contain three distinct colored neighbors in $\mathcal{N}^{(j)}_i$, then it first sorts the estimates of $x^{(j)}[k]$ received from $\mathcal{N}^{(j)}_i$ in descending order, just as in Step 3. It then removes the highest and lowest $f$ estimates (i.e., removes $2f$ estimates in all), and updates $\hat{x}^{(j)}_i[k]$ as follows:
\begin{equation}
\hat{x}^{(j)}_i[k+1]=\lambda_j\left(\sum_{l \in \mathcal{M}^{(j)}_i[k]}w^{(j)}_{il}[k]\hat{x}^{(j)}_{l}[k]\right),
\label{eqn:LFRE_3}
\end{equation}
where $\mathcal{M}^{(j)}_i[k] \subset \mathcal{N}^{(j)}_i (\subseteq \mathcal{N}_i)$ is the set of nodes from which node $i$ chooses to accept estimates of $x^{(j)}[k]$ at time-step $k$, after removing the $f$ highest and $f$ lowest estimates from $\mathcal{N}^{(j)}_i$. The weights $w^{(j)}_{il}[k]$ are non-negative and chosen to satisfy $\sum_{l \in \mathcal{M}^{(j)}_i[k]}{w^{(j)}_{il}}[k]=1$.
\end{itemize}
We refer to the above algorithm as the Local-Filtering based Resilient Estimation (LFRE) algorithm for $f$-local mono-chromatic Byzantine adversaries; the steps of this algorithm are illustrated in Fig. \ref{fig:LFRE}. The following key result of our paper demonstrates how redundancy, diversity, and trust can be leveraged to perform resilient distributed state estimation.

\begin{figure}
    \centering
    \includegraphics[scale=1]{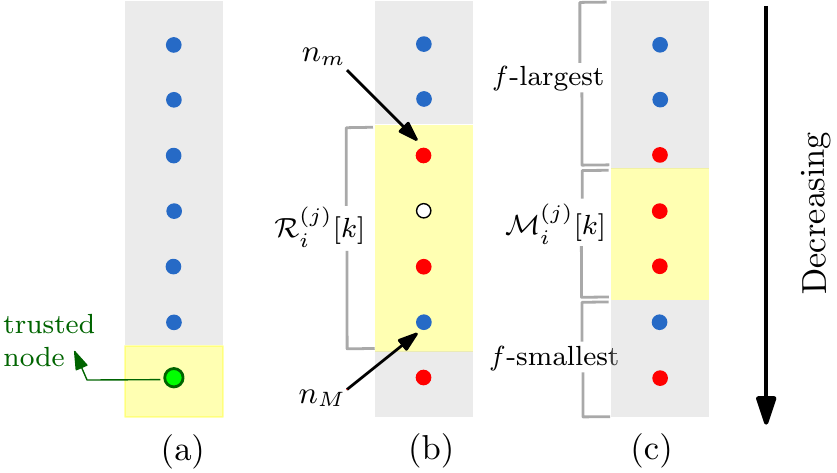}
    \caption{{Illustration of various steps in the LFRE Algorithm. (a)~Node $i$ utilizes estimates from trusted neighbours in $\mathcal{N}_{i}^{(j)}$ to update its state. (b) If $\mathcal{N}_{i}^{(j)}$ has no trusted node but contains three distinct colored nodes (blue, red, and white in Fig.~\ref{fig:LFRE}(b)), then node $i$ sorts estimates from nodes in $\mathcal{N}_{i}^{(j)}$ in descending order. From the top (respectively bottom), node $i$ determines the smallest index $n_m$ (respectively largest index $n_M$) of the node with color different than the color of the node with the most  extreme estimate on that end. It then only considers estimates from nodes with indices within the range $n_m$ and $n_M$. (c) If $\mathcal{N}_{i}^{(j)}$ neither has a trusted node nor three distinct colored nodes, then node $i$ sorts the received estimates in descending order, removes the $f$-largest and $f$-smallest estimates, and considers only the remaining estimates in its update rule.}}
    \label{fig:LFRE}
\end{figure}
\begin{theorem}
Consider the system \eqref{eqn:plant} and measurement model \eqref{eqn:Obsmodel}, and suppose Assumption \ref{ass:system} holds.
 Let the communication graph $\mathcal{G}$ be strongly $((2f+1),\Delta(\cdot),\mathcal{T})$-robust w.r.t. $\mathcal{S}_j, \forall \lambda_j\in\Omega_{U}(\mathbf{A})$. Then, the LFRE algorithm for $f$-local mono-chromatic Byzantine adversaries solves Problem \ref{prob:resdist}.
\label{thm:main1}
\end{theorem}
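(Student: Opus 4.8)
The plan is to reduce the vector estimation problem to a collection of scalar problems, one per eigen-mode, and then to analyze the propagation of estimation errors along the acyclic level structure furnished by a MEDAG. Since Assumption \ref{ass:system} lets us take $\mathbf{A}=\mathrm{diag}(\lambda_1,\dots,\lambda_n)$ without loss of generality, it suffices to show, for every regular node $i\in\mathcal{R}$ and every mode $\lambda_j$, that the error $e^{(j)}_i[k]\triangleq\hat{x}^{(j)}_i[k]-x^{(j)}[k]$ tends to $0$; summing over the $n$ decoupled components then yields $\|\hat{\mathbf{x}}_i[k]-\mathbf{x}[k]\|\to 0$. For modes $\lambda_j\in\mathcal{O}_i$ that node $i$ detects locally (this includes all stable modes and, in particular, every mode not in $\Omega_{U}(\mathbf{A})$, for which every node is a source), Lemma \ref{lemma:luen} already gives $e^{(j)}_i[k]\to 0$, and since the observer error obeys a stable LTI recursion this decay is geometric. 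Hence the crux is to handle, for each $\lambda_j\in\Omega_{U}(\mathbf{A})$, the regular non-source nodes $i\in(\mathcal{V}\setminus\mathcal{S}_j)\cap\mathcal{R}$ running the consensus-type updates \eqref{eqn:LFRE_1}--\eqref{eqn:LFRE_3}.

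First I would invoke the hypothesis together with Theorem \ref{thm:MEDAG_existence}: strong $(2f+1,\Delta(\cdot),\mathcal{T})$-robustness w.r.t. $\mathcal{S}_j$ guarantees a MEDAG $\mathcal{G}_j$ for $\lambda_j$, and hence a partition of $\mathcal{R}$ into levels $\mathcal{L}^{(j)}_0,\dots,\mathcal{L}^{(j)}_{T_j}$ with $\mathcal{L}^{(j)}_0=\mathcal{S}_j\cap\mathcal{R}$ and with every regular MEDAG-neighbor of a level-$q$ node lying in a strictly lower level. The core technical step is to show that, regardless of which of the three branches a regular node $i$ executes, its update can be rewritten as a genuine convex combination of the estimates held by its \emph{regular} MEDAG-neighbors, i.e. $\hat{x}^{(j)}_i[k+1]=\lambda_j\sum_{l}\alpha_{il}[k]\,\hat{x}^{(j)}_l[k]$ with $\alpha_{il}[k]\geq 0$, $\sum_l\alpha_{il}[k]=1$, and the sum taken over $\mathcal{N}^{(j)}_i\cap\mathcal{R}$. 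For the trust branch \eqref{eqn:LFRE_1} this is immediate, since $\mathcal{T}\cap\mathcal{A}=\emptyset$. For the redundancy branch \eqref{eqn:LFRE_3}, I would use $\mathcal{N}^{(j)}_i\subseteq\mathcal{N}_i$ and $f$-locality to bound the number of adversarial MEDAG-neighbors by $f$, and then apply the standard argument that, after discarding the $f$ largest and $f$ smallest estimates, the filtered value lies in the interval spanned by the regular neighbors' values. For the diversity branch \eqref{eqn:LFRE_2}, I would exploit mono-chromaticity: after the descending sort, the retained extreme indices $n_m$ and $n_M$ are defined to have colors different from the most extreme retained estimates, so at least one member of each surviving extreme pair is necessarily regular, which again sandwiches the update between two regular-neighbor estimates. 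Subtracting the true recursion $x^{(j)}[k+1]=\lambda_j x^{(j)}[k]$ and using $\sum_l\alpha_{il}[k]=1$ then yields the error recursion $e^{(j)}_i[k+1]=\lambda_j\sum_{l\in\mathcal{N}^{(j)}_i\cap\mathcal{R}}\alpha_{il}[k]\,e^{(j)}_l[k]$, in which only regular, strictly-lower-level errors appear.

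With this recursion in hand I would close the argument by induction on the MEDAG levels. At level $0$ the source nodes run Luenberger observers, so their errors satisfy $|e^{(j)}_i[k]|\leq c_0\beta^k$ for some $\beta\in(0,1)$ and constant $c_0$. For the inductive step, if every node $l$ in levels $0,\dots,q-1$ obeys $|e^{(j)}_l[k]|\leq c_{q-1}\beta^k$, then for a level-$q$ node $i$ convexity of the $\alpha_{il}[k]$ gives $|e^{(j)}_i[k+1]|\leq|\lambda_j|\max_l|e^{(j)}_l[k]|\leq|\lambda_j|c_{q-1}\beta^k$, which is again of the form $c_q\beta^k$ after absorbing the factor $|\lambda_j|/\beta$ and the finitely many initial terms into a new constant $c_q$. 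The crucial observation is that multiplying by the constant $\lambda_j$ inflates only the amplitude and not the geometric rate $\beta$, so the bound survives the finitely many levels $q\leq T_j\leq N-1$; hence $e^{(j)}_i[k]\to 0$ for every regular node and every mode, which combined with the first paragraph completes the proof.

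I anticipate the main obstacle to be the core technical step of the second paragraph: establishing, uniformly across the three update branches and across all admissible $f$-local mono-chromatic adversarial sets, that each filtered update is expressible as a convex combination of regular lower-level neighbors' estimates. The diversity branch is the most delicate and genuinely new — one must argue, using only the mono-chromaticity of $\mathcal{A}$ and the color-based trimming rule defining $n_m$ and $n_M$, that the surviving extreme estimates cannot both be adversarial, so that the retained window is always bracketed by regular values.
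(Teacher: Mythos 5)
Your proposal is correct and follows essentially the same route as the paper's proof: decompose by eigen-mode, invoke Theorem \ref{thm:MEDAG_existence} to obtain the MEDAG level structure, and induct on levels with a three-way case analysis in which the diversity branch is handled by observing that $n_1$ and $n_m$ (resp.\ $n_{|\mathcal{N}^{(j)}_i|}$ and $n_M$) have distinct colors, so mono-chromaticity forces at least one of each pair to be regular and the retained estimates are sandwiched by regular lower-level values. Your reformulation of the sandwich as a convex combination over regular neighbors and your tracking of geometric rates are harmless refinements (the latter is not even needed, since $\lambda_j$ multiplies the lower-level errors only once per step rather than compounding); the paper instead bounds the retained errors directly by $\min$ and $\max$ over $\mathcal{L}^{(j)}_0$ and cites \cite{mitra_auto} for the redundancy branch.
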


\begin{proof}
Consider an $f$-local mono-chromatic Byzantine adversarial set $\mathcal{A}$, and let $\mathcal{R}=\mathcal{V}\setminus\mathcal{A}.$ 
Based on Lemma \ref{lemma:luen}, notice that a regular node $i\in\mathcal{R}$ can asymptotically estimate each component of the state vector $\mathbf{x}[k]$ corresponding to its set of detectable eigenvalues $\mathcal{O}_i$. It remains to show that node $i \in \mathcal{R}$ can also recover $x^{(j)}[k], \, \forall \lambda_j \in \bar{\mathcal{O}}_i$, based on the LFRE algorithm for $f$-local mono-chromatic adversaries. To this end, we argue that for each $\lambda_j \in \Omega_{U}(\mathbf{A})$, $\hat{x}^{(j)}_i[k]$ converges to $x[k]$ asymptotically for all $i\in\mathcal{R}$. 

Consider a mode $\lambda_j\in\Omega_{U}(\mathbf{A})$, and notice that based on Theorem \ref{thm:MEDAG_existence},  there exists a sub-graph $\mathcal{G}_j$ satisfying all the properties of a $(2f+1,\Delta(\cdot),\mathcal{T})$ MEDAG. Specifically, the set of regular nodes $\mathcal{R}=\mathcal{V}\setminus\mathcal{A}$ can be partitioned into disjoint levels $\{\mathcal{L}^{(j)}_0, \ldots, \mathcal{L}^{(j)}_q, \ldots, \mathcal{L}^{(j)}_{T_j}\}$. We induct on the level number $q$. For $q=0$, recall that $\mathcal{L}^{(j)}_0=\mathcal{S}_j\cap\mathcal{R}$ by definition. Hence, it follows from Lemma \ref{lemma:luen} that for each $i\in\mathcal{L}^{(j)}_0$, $\lim_{k\to\infty}e^{(j)}_i[k]=0$, where $e^{(j)}_i[k]=\hat{x}^{(j)}_i[k]-x^{(j)}[k]$. Next, consider a node $i$ in level $q=1$. We split our subsequent analysis into three separate cases.

\textbf{Case 1}: Suppose $\mathcal{N}^{(j)}_i\cap\mathcal{T}\neq\emptyset$. Then, based on Step 2 of the LFRE algorithm for $f$-local mono-chromatic adversaries, node $i$ employs the update rule \eqref{eqn:LFRE_1}. In this case, the error $e^{(j)}_i[k]$ evolves as follows:
\begin{equation}
{e}^{(j)}_i[k+1]=\lambda_j\left(\sum_{l \in \mathcal{N}^{(j)}_i\cap\mathcal{T}}\bar{w}^{(j)}_{il}{e}^{(j)}_{l}[k]\right),
\label{err1}
\end{equation}
where we used that (i) $x^{(j)}[k+1]=\lambda_jx^{(j)}[k]$ based on the structure of the $\mathbf{A}$ matrix, and (ii) the convexity of the weights $\bar{w}^{(j)}_{il}$. Based on the fact that $\mathcal{T}\subseteq\mathcal{R}$, and property (ii) of a MEDAG in Defn. \ref{defn:MEDAG_r,delta,T)}, we have that $\mathcal{N}^{(j)}_i \cap \mathcal{T} \subseteq \mathcal{L}^{(j)}_0$. It then follows from \eqref{err1} and the foregoing discussion that $\lim_{k\to\infty}e^{(j)}_i[k]=0$.

\textbf{Case 2}: Suppose $\mathcal{N}^{(j)}_i\cap\mathcal{T}=\emptyset$, but there exist three distinct colored nodes in $\mathcal{N}^{(j)}_i$. Then, based on Step 3 of the filtering algorithm, node $i$ employs the update rule \eqref{eqn:LFRE_2}.  In this case, the error $e^{(j)}_i[k]$ evolves as follows:
\begin{equation}
{e}^{(j)}_i[k+1]=\lambda_j\left(\sum_{l \in \mathcal{R}^{(j)}_i[k]}\tilde{w}^{(j)}_{il}[k]{e}^{(j)}_{l}[k]\right),
\label{eqn:err2}
\end{equation}
where we have once again used that $x^{(j)}[k+1]=\lambda_jx^{(j)}[k]$, and that the weights $\tilde{w}^{(j)}_{il}[k]$ are convex.
Observe that whenever $\mathcal{N}^{(j)}_i$ contains three distinct colored nodes, $\mathcal{R}^{(j)}_i[k]$ is guaranteed to be non-empty by definition. We now claim that at each time-step $k$, $e^{(j)}_l[k]$ lies in the convex hull of the points $e^{(j)}_s[k], s\in\mathcal{L}^{(j)}_0$, for all $l\in\mathcal{R}^{(j)}_i[k]$. To this end, fix a time-step $k$, and suppose that the node with the highest estimate of $x^{(j)}[k]$ in $\mathcal{N}^{(j)}_i$, namely node $n_1$, is regular. Then, we have that for each $l\in\mathcal{R}^{(j)}_i[k]$, $\hat{x}^{(j)}_l[k] \leq \hat{x}^{(j)}_{n_1}[k]$, where $n_1\in\mathcal{N}^{(j)}_i \cap \mathcal{R} \subseteq \mathcal{L}^{(j)}_0$. The last inclusion follows from property (ii) in Defn. \ref{defn:MEDAG_r,delta,T)}. Now consider the case when node $n_1$ is adversarial. Then, given the mono-chromaticity of the adversarial model, it must be that node $n_m$, as defined in Step 3, is regular, since $\Delta(n_m)\neq\Delta(n_1)$. Furthermore, based on how $\mathcal{R}^{(j)}_i[k]$ is defined in Step 3, it follows that for each $l\in\mathcal{R}^{(j)}_i[k]$, $\hat{x}^{(j)}_l[k] \leq \hat{x}^{(j)}_{n_m}[k]$, where $n_m\in\mathcal{N}^{(j)}_i \cap \mathcal{R} \subseteq \mathcal{L}^{(j)}_0$. Thus, we have established that at each time-step $k$, ${e}^{(j)}_l[k] \leq \max_{s\in\mathcal{L}^{(j)}_0} {e}^{(j)}_s[k]$, $\forall l\in\mathcal{R}^{(j)}_i[k].$ An identical argument reveals that at each time-step $k$, ${e}^{(j)}_l[k] \geq \min_{s\in\mathcal{L}^{(j)}_0} {e}^{(j)}_s[k]$, $\forall l\in\mathcal{R}^{(j)}_i[k].$ The above discussion, coupled with \eqref{eqn:err2}, and the fact that $\lim_{k\to\infty}e^{(j)}_s[k]=0, \forall s\in \mathcal{L}^{(j)}_0$, readily implies that $\lim_{k\to\infty}e^{(j)}_i[k]=0$.

\textbf{Case 3}: Suppose $\mathcal{N}^{(j)}_i\cap\mathcal{T}=\emptyset$, and there do not exist three distinct colored nodes in $\mathcal{N}^{(j)}_i$. Then, based on property (i) of a MEDAG in Defn. \ref{defn:MEDAG_r,delta,T)}, it must be that $|\mathcal{N}^{(j)}_i| \geq (2f+1)$.
In this case, node $i$ employs the update rule \eqref{eqn:LFRE_3}, which corresponds precisely to the resilient filtering algorithm developed in \cite{mitra_auto} for $f$-local Byzantine adversarial models. Thus, for this case, the fact that $\lim_{k\to\infty}e^{(j)}_i[k]=0$ follows directly from the arguments in \cite{mitra_auto}. 

This completes the analysis for the base case $q=1$. Using arguments similar to those for the base case, and a simple inductive reasoning as in \cite{mitra_auto}, one can establish that the result holds for all levels $q\in\{1,\ldots,T_j\}$. 
\end{proof}

\subsection{Resilient Distributed State Estimation Under Mono-chromatic Byzantine Adversaries}
\label{sec:mono_chrom_adv}
We now briefly discuss how the developments in the previous section can be easily generalized to account for a more powerful adversarial model wherein the assumption of $f$-locality is relaxed, i.e., we no longer require the adversarial set $\mathcal{A}$ to be $f$-local. We will, however, continue to assume that $\mathcal{A}$ is mono-chromatic and, to make the discussion meaningful, that $\mathcal{A}\subset\mathcal{V}.$ The appropriate concept that we need here is $(\infty,\Delta(\cdot),\mathcal{T})$-reachability, to be henceforth referred to as $(\Delta(\cdot),\mathcal{T})$ reachability - a special case of $(r,\Delta(\cdot),\mathcal{T})$-reachability in Defn. \ref{defn:r_delta_tau_reachable} with $r=\infty$, where the reachability condition can clearly only be satisfied via diversity or trust. The more stringent concept of  $(\Delta(\cdot),\mathcal{T})$-reachability seeks to make up for the inadequacy of the traditional notion of redundancy in coping with a mono-chromatic Byzantine adversarial model. Indeed, once $f$-locality is relaxed, a node may have direct or indirect paths from several informative nodes and, yet, fall short of estimating the state dynamics. In particular, an adversary can compromise all such informative nodes if they are of the same type, and not a part of the trusted set $\mathcal{T}$. This highlights the importance of incorporating diversity and/or trust into the measurement and communication structure of the network as alternatives to incorporating redundancy. 

Note that a strongly $(\Delta(\cdot),\mathcal{T})$-robust graph w.r.t. $\mathcal{S}$ and a $(\Delta(\cdot),  \mathcal{T})$ MEDAG are simply special cases of Defn.'s \ref{defn:strong_rdeltatau} and \ref{defn:MEDAG_r,delta,T)}, respectively, where the redundancy parameter is $\infty$. Then, following identical arguments as in Thm. \ref{thm:MEDAG_existence}, one can establish that for each $\lambda_j \in \Omega_{U}(\mathbf{A})$, $\mathcal{G}$ contains a  $(\Delta(\cdot),\mathcal{T})$ MEDAG for $\lambda_j$ if and only if $\mathcal{G}$ is strongly $(\Delta(\cdot),\mathcal{T})$-robust w.r.t. $\mathcal{S}_j$. To estimate its locally undetectable portion of the state, suppose each node $i\in\mathcal{R}$ executes only the first 3 steps of the filtering algorithm in Section \ref{subsec: floc_algo}, to update $\hat{x}^{(j)}_i[k], \forall \lambda_j\in\bar{\mathcal{O}}_i$. Let us call this algorithm the LFRE algorithm for mono-chromatic Byzantine adversaries. We then have the following result. 
\begin{theorem}
Consider the system \eqref{eqn:plant} and measurement model \eqref{eqn:Obsmodel}, and suppose Assumption \ref{ass:system} holds.
Let the communication graph $\mathcal{G}$ be strongly $(\Delta(\cdot),\mathcal{T})$-robust w.r.t. $\mathcal{S}_j, \forall \lambda_j\in\Omega_{U}(\mathbf{A})$. Then, the LFRE algorithm for mono-chromatic Byzantine adversaries solves the variant of Problem \ref{prob:resdist} corresponding to a mono-chromatic Byzantine adversary model.
\label{thm:main2}
\end{theorem}
\begin{proof}
The proof is similar to that of  Theorem \ref{thm:main1}. 
\end{proof}
\begin{remark}
\textbf{(Implications for Countering Spoofing Attacks)}:
Recently, in the context of multi-robot coordination, the authors in \cite{spoof1,spoof2} propose methods to tackle the so called ``Sybil attack", where an attacker spoofs or impersonates the identities of existing agents to gain a disproportionate advantage in the network. The methods developed in \cite{spoof1,spoof2} are based on analyzing the physics of wireless signals. Since such signals are invariably corrupted by environment and channel noise, the guarantees in \cite{spoof1,spoof2} are of a probabilistic nature. In contrast, we claim that the ideas developed in this section can provide deterministic guarantees in the face of spoofing attacks. The key enabling observation here is that even if an adversary generates multiple identities of an existing regular node, each such identity would share the same digital signature as that of the node being replicated. In other words, the node being spoofed along with its replicated identities would all be of the same type, or color. Thus, regardless of the number of fake identities, as long as the conditions in Theorem \ref{thm:main2} are met, our techniques would go through.
\label{rem:spoofing}
\end{remark}

\section{On the Complexity of Incorporating Diversity and Trust}
\label{sec:design}
 In practice, hardening sensors against attacks (i.e., making nodes trusted), and implementing several variants of nodes (i.e., making the network diverse), comes at a cost. Thus, it is natural to consider the design problem of (i) finding a trusted set of minimum cardinality; and/or (ii) finding the minimum number of colors, and the corresponding allocation of colors to nodes, so as to make the resulting network strongly-robust to a desired extent. In what follows, we separately explore the complexity of each of these problems.

\subsection{On the Complexity of Selecting Trusted Nodes}
To isolate the complexity of selecting trusted nodes, we consider a scenario where all nodes are of the same color (i.e., $\Delta(i)=\Delta(j), \forall i,j\in\mathcal{V}$). To proceed, we formally state the problem of interest and then characterize its complexity. 

\begin{problem} (\textbf{Trusted Strong-Robustness Augmentation (TSRA))} Given a system model \eqref{eqn:plant}, a measurement model \eqref{eqn:Obsmodel}, a communication graph $\mathcal{G}=(\mathcal{V},\mathcal{E})$ where all nodes are of the same color (i.e., $\Delta(i)=\Delta(j), \forall i,j\in\mathcal{V}$), and positive integers $r,t$, does there exist a set of trusted nodes $\mathcal{T}$ of cardinality $t$, such that $\mathcal{G}$ is strongly $(r,\Delta(\cdot),\mathcal{T})$-robust w.r.t. $\mathcal{S}_j$, $\forall \lambda_j \in \Omega_{U}(\mathbf{A})$?
\label{prob:TSRA}
\end{problem}  

To characterize the complexity of the TSRA problem, we will provide a reduction from the NP-hard Set Cover (SC) problem, defined as follows.

\begin{definition} (\textbf{Set Cover (SC)}) Given a collection of elements $\mathcal{U}=\{1, \ldots, p\}$, a set of subsets $\mathcal{F}=\{\mathcal{F}_1,\ldots,\mathcal{F}_m\}$ of $\mathcal{U}$, and a positive integer $t$, do there exist $t$ subsets in $\mathcal{F}$ whose union is $\mathcal{U}$?
\end{definition}

\begin{theorem} 
The TSRA problem is NP-complete.
\label{thm:TSRAP}
\end{theorem}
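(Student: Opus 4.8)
The plan is to show membership in NP and NP-hardness separately, reducing from Set Cover (SC). Throughout, I would exploit the fact that since all nodes share a single color, the diversity condition (ii) in Defn.~\ref{defn:r_delta_tau_reachable} can never be met; hence a set $\mathcal{C}$ is $(r,\Delta(\cdot),\mathcal{T})$-reachable if and only if some node in $\mathcal{C}$ has either $r$ neighbors or a trusted neighbor outside $\mathcal{C}$. This collapses the whole problem to an interplay between redundancy and trust, which is what makes a reduction from SC natural.

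For membership in NP, I would take the candidate trusted set $\mathcal{T}$ as the certificate. To verify a ``yes'' instance I would first compute each source set $\mathcal{S}_j$ (polynomially many rank tests), and then check strong $(r,\Delta(\cdot),\mathcal{T})$-robustness w.r.t. each $\mathcal{S}_j$. The latter is not obviously polynomial, since strong robustness quantifies over exponentially many subsets $\mathcal{C}$; the resolution is to run the threshold-$r$ analogue of the MEDAG-construction algorithm: start by activating $\mathcal{S}_j$ and repeatedly activate any node having $r$ activated neighbors, three distinct-colored activated neighbors, or a trusted activated neighbor. Exactly as in the proof of Theorem~\ref{thm:MEDAG_existence}, this process activates all of $\mathcal{V}$ if and only if $\mathcal{G}$ is strongly $(r,\Delta(\cdot),\mathcal{T})$-robust w.r.t. $\mathcal{S}_j$: the final unactivated set, if nonempty, is a witness to non-reachability, and conversely any non-reachable set can never be penetrated by the activation front. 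Since the process terminates in at most $N-1$ rounds, verification is polynomial and TSRA $\in$ NP.

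For hardness, given an SC instance $(\mathcal{U}=\{1,\dots,p\},\ \mathcal{F}=\{\mathcal{F}_1,\dots,\mathcal{F}_m\},\ t)$ I would build the following TSRA instance. Introduce a ``subset node'' $s_k$ for each $\mathcal{F}_k$ and an ``element node'' $e_u$ for each $u\in\mathcal{U}$, and add an edge $(s_k,e_u)$ whenever $u\in\mathcal{F}_k$, so that the in-neighbors of $e_u$ are exactly the subsets containing $u$. I would take the scalar plant $\mathbf{A}=\lambda$ with $|\lambda|\ge 1$, assign $\mathbf{C}_{s_k}=1$ and $\mathbf{C}_{e_u}=0$; then $(\mathbf{A},\mathbf{C})$ is detectable, $\Omega_U(\mathbf{A})=\{\lambda\}$, and the source set is exactly $\mathcal{S}=\{s_1,\dots,s_m\}$, so $\mathcal{V}\setminus\mathcal{S}=\{e_u\}$. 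Finally I set the redundancy parameter to $r=m+1$ and keep the trust budget $t$. This construction is clearly polynomial.

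The correctness argument rests on the choice $r=m+1$: every element node has at most $m$ in-neighbors, so no subset $\mathcal{C}\subseteq\{e_u\}$ can ever be reachable through redundancy, leaving trust as the only mechanism. Since the neighbors of each $e_u$ are source nodes (hence always outside any $\mathcal{C}\subseteq\mathcal{V}\setminus\mathcal{S}$), I would show that $\mathcal{G}$ is strongly $(r,\Delta(\cdot),\mathcal{T})$-robust w.r.t. $\mathcal{S}$ if and only if \emph{every} element node $e_u$ has a trusted in-neighbor: the ``only if'' follows by testing each singleton $\mathcal{C}=\{e_u\}$, and the ``if'' because any nonempty $\mathcal{C}$ can then be reached through any one of its element nodes. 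Because element nodes have no out-edges, making one of them trusted helps no set, so an optimal $\mathcal{T}$ may be taken inside $\{s_1,\dots,s_m\}$; identifying $\{s_k\in\mathcal{T}\}$ with the corresponding subfamily of $\mathcal{F}$ then makes ``every $e_u$ has a trusted neighbor'' equivalent to ``the chosen subsets cover $\mathcal{U}$'', while monotonicity of strong robustness under enlarging $\mathcal{T}$ lets me pass between ``exactly $t$'' and ``at most $t$''. I expect the main obstacle to be precisely this tightness argument — certifying that redundancy is fully neutralized and that trusted element nodes are useless, so that minimum trusted sets correspond exactly to minimum covers; the NP-membership step is the other place requiring care, since it hinges on reusing Theorem~\ref{thm:MEDAG_existence} to decide strong robustness efficiently rather than by brute force over subsets.
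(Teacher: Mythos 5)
Your proposal is correct and follows essentially the same route as the paper: NP-membership via the MEDAG-construction algorithm of Theorem~\ref{thm:MEDAG_existence} as a polynomial-time verifier, and NP-hardness via the same bipartite subset-node/element-node reduction from Set Cover with a scalar unstable plant whose source set is exactly the subset nodes. The only (harmless) deviation is your choice $r=m+1$ rather than the paper's $r=|\mathcal{F}|$, which neutralizes the redundancy condition outright and spares you the paper's short argument that a node adjacent to all of $\bar{\mathcal{F}}$ would force a trivially ``yes'' SC instance.
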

\begin{proof} 
We first argue that TSRA $\in$ NP. To see this, notice that for ``yes" instances of the problem, the set of trusted nodes $\mathcal{T}$ of size $t$ yields a certificate w.r.t. the MEDAG construction algorithm described in Section \ref{subsec:floc_graph}. Specifically, based on Theorem \ref{thm:MEDAG_existence}, for each $\lambda_j \in \Omega_{U}(\mathbf{A})$, the MEDAG construction algorithm terminates if and only if 
 $\mathcal{G}$ is strongly $(r,\Delta(\cdot),\mathcal{T})$-robust w.r.t. $\mathcal{S}_j$; thus, such an algorithm can be used to verify the desired graph property. That this verification algorithm has polynomial-time complexity follows from an analogous argument made in \cite[Proposition 2]{mitra_auto}.

Next, we establish that TSRA is NP-hard. To this end, given an instance of SC, we first construct an instance of TSRA as follows. We consider a scalar unstable dynamical system $x[k+1]=\lambda x[k]$, and construct an associated communication graph $\mathcal{G}$ with node set $\mathcal{V}=\bar{\mathcal{U}}\cup\bar{\mathcal{F}}$, where $\bar{\mathcal{U}}=\{u_1,\ldots,u_p\}$, and $\bar{\mathcal{F}}=\{f_1,\ldots,f_m\}$. For each $i\in\{1,\ldots,p\}$, node $u_i\in\bar{\mathcal{U}}$ corresponds to element $i$ of $\mathcal{U}$, and for each $j\in\{1,\ldots,m\}$, node $f_j\in\bar{\mathcal{F}}$ corresponds to subset $\mathcal{F}_j \in \mathcal{F}$. If $i \in \mathcal{F}_j$, then a directed edge is added from node $f_j$ to node $u_i$ in $\mathcal{G}$. Each node $f_j\in\bar{\mathcal{F}}$ is allocated a  non-zero measurement of the state $x[k]$. The cardinality of the trusted set $\mathcal{T}$ is set to $t$, and the desired level of strong-robustness is given by $r=|\mathcal{F}|$. Clearly, given any instance of SC, the above TSRA instance can be constructed in polynomial-time. We now argue that the answer to any given instance of SC is ``yes" if and only if the answer to the constructed instance of TSRA is ``yes".

Suppose the answer to the SC instance is ``yes". Thus, there exists a set of $t$ subsets of $\mathcal{F}$ whose union is $\mathcal{U}$. Without loss of generality, let these subsets be $\{\mathcal{F}_1, \ldots, \mathcal{F}_t\}$. Let the set of trusted nodes $\mathcal{T}$ be $\{f_1, \ldots, f_t\}$. We first observe that the set of source nodes $\mathcal{S}$ (the set of nodes that can detect $\lambda$)  of $\mathcal{G}$ is precisely the set $\bar{\mathcal{F}}$. Thus, $\mathcal{T}\subseteq\mathcal{S}$. To establish that $\mathcal{G}$ is strongly $(r,\Delta(\cdot),\mathcal{T})$-robust w.r.t. $\mathcal{S}$, we pick a non-empty subset $\mathcal{C}\subseteq\mathcal{V}\setminus\mathcal{S}=\bar{\mathcal{U}}$. Since $\{\mathcal{F}_1, \ldots, \mathcal{F}_t\}$ cover $\mathcal{U}$, $\mathcal{N}_{u_i}\cap\mathcal{T}\neq \emptyset, \forall u_i\in\bar{\mathcal{U}}$. Thus, $\mathcal{C}$ is $(r,\Delta(\cdot),\mathcal{T})$-reachable, and the answer to the constructed instance of TSRA is ``yes".

To show the converse, we proceed via contraposition. Suppose the answer to the SC instance is ``no". In other words, no $t$ subsets of $\mathcal{F}$ cover $\mathcal{U}$. Consider any set of trusted nodes $\mathcal{T}$ of cardinality $t$. Let $\mathcal{M}=\bar{\mathcal{F}}\cap\mathcal{T}$. We first consider the case when $\mathcal{M}$ is non-empty. In this case, there exists at least one node $u_i \in \bar{\mathcal{U}}$ that has neighbors (if any) only in $\bar{\mathcal{F}}\setminus\mathcal{M}$. Noting that the source set $\mathcal{S}=\bar{\mathcal{F}}$, we consider the non-empty set $\mathcal{C}=\{u_i\}$ contained in $\mathcal{V}\setminus\mathcal{S}$. Since $r=|\bar{\mathcal{F}}|$, it follows that $u_i$ neither has a trusted neighbor nor has at least $r$ neighbors. Thus, $\mathcal{C}$ is not $(r,\Delta(\cdot),\mathcal{T})$-reachable.\footnote{Note that as $\Delta(i)=\Delta(j),\forall i,j \in\mathcal{V}$ in TSRA, the requirements for $(r,\Delta(\cdot),\mathcal{T})$-reachability cannot be met via diversity (item (ii) in Defn. \ref{defn:r_delta_tau_reachable}).}  For analyzing the case when $\mathcal{M}$ is empty, we observe that there must exist at least one node $u_i\in\bar{\mathcal{U}}$ such that $\mathcal{N}_{u_i}\subset\bar{\mathcal{F}}$; else, each $\mathcal{F}_j\in\mathcal{F}$ would cover $\mathcal{U}$, and the answer to SC would be trivially ``yes", leading to a contradiction. It then follows that $\mathcal{C}=\{u_i\}$ is not $(r,\Delta(\cdot),\mathcal{T})$-reachable. Consequently, $\mathcal{G}$ is not strongly $(r,\Delta(\cdot),\mathcal{T})$-robust w.r.t. $\mathcal{S}$, regardless of the way $t$ trusted nodes are picked in $\mathcal{G}$. In other words, the answer to the constructed TSRA instance is ``no". This completes the proof.
\end{proof}

Given the above result, we now briefly describe a simple greedy heuristic that finds a potentially sub-optimal set of trusted nodes in polynomial time.

\textbf{Greedy Heuristic for Selecting Trusted Nodes}: Consider the setup in Problem \ref{prob:TSRA}, and suppose we need to find a set of trusted nodes $\mathcal{T}$ such that $\mathcal{G}$ is strongly $(r,\Delta(\cdot),\mathcal{T})$-robust w.r.t. $\mathcal{S}_j$, $\forall \lambda_j \in \Omega_{U}(\mathbf{A})$. We proceed as follows.  Fix a $\lambda_j\in \Omega_{U}(\mathbf{A})$, and suppose each node $i\in\mathcal{V}\setminus\mathcal{S}_j$ is reachable from $\mathcal{S}_j$ (since otherwise, there is no hope of achieving the desired property). Our proposed greedy algorithm proceeds in rounds $l$, where in each round precisely one node is made trusted, if needed. Two lists are maintained and updated each round: a list of ``active" nodes $\mathcal{W}_j(l)$, and a list of trusted nodes $\mathcal{T}_j(l)$, with $\mathcal{W}_j(0)$ initially set to $\mathcal{S}_j$, and $\mathcal{T}_j(0)$ to $\emptyset$. At the beginning of round $l$, where $l \geq 1$, each node in $\mathcal{W}_j(l-1)\setminus\mathcal{T}_j(l-1)$ is a candidate for being made trusted in that round. For each such candidate node $v\in\mathcal{W}_j(l-1)\setminus\mathcal{T}_j(l-1)$,  we run a virtual bootstrap percolation\footnote{Given a graph $\mathcal{G}$ and a threshold $r\geq2$, bootstrap percolation can be viewed as a process of spread of activation where one starts off with an initially active set. The process then evolves over the network in rounds, where in each round an inactive node becomes active if and only if it has at least $r$ active neighbors; here, we modify the activation rule to suit our purpose.} process by making node $v$ trusted temporarily, and computing the number of new nodes it activates in the process. Here, an inactive node gets activated if it either has at least $r$ active neighbors, or a trusted active neighbor. Let $\delta(v)$ denote the new nodes activated by node $v$. Having run this virtual percolation process separately for each $v\in\mathcal{W}_j(l-1)\setminus\mathcal{T}_j(l-1)$, we greedily pick $\tau(l)\in \argmax_{v\in\mathcal{W}_j(l-1)\setminus\mathcal{T}_j(l-1)} |\delta(v)|$ to be trusted in round $l$, i.e., we pick the node that activates the maximum number of nodes.  Subsequently, we update $\mathcal{W}_j(l)=\mathcal{W}_j(l-1)\cup\delta(\tau(l))$, and $\mathcal{T}_j(l)=\mathcal{T}_j(l-1)\cup\tau(l)$. Let $\bar{l}_j$ be the smallest integer such that $\mathcal{W}_j(\bar{l}_j)=\mathcal{V}$. We then say that the greedy algorithm described above terminates in round $\bar{l}_j$. It is easy to see that $\bar{l}_j \leq N-1$, and that on termination, $\mathcal{T}_j(\bar{l}_j)$ is such that $\mathcal{G}$ is strongly $(r,\Delta(\cdot),\mathcal{T}_j(\bar{l}_j))$-robust w.r.t. $\mathcal{S}_j$. Thus, we can run the above greedy heuristic for each $\lambda_j \in \Omega_{U}(\mathbf{A})$, and obtain the desired trusted set $\mathcal{T}=\cup_{\lambda_j \in \Omega_{U}(\mathbf{A})}\mathcal{T}_j(\bar{l}_j)$. 

A rigorous theoretical characterization of the performance of the above greedy heuristic is beyond the scope of this paper. However, it is not too hard to verify that this heuristic does output a trusted set of optimal size for simple graphs such as star graphs, directed trees, rings and complete graphs. 

\subsection{On the Complexity of Allocating Diversity}
We now turn our attention to the problem of allocating colors to the nodes from a set of specified cardinality so as to achieve a certain level of strong-robustness. To isolate the challenges associated with this problem, our subsequent analysis will focus exclusively on scenarios where the trusted set $\mathcal{T}$ is empty. Next, we formally state the problem of interest.

\begin{problem} (\textbf{$q$-Colored Strong-Robustness Augmentation ($q$-CSRA))} Given a system model \eqref{eqn:plant}, a measurement model \eqref{eqn:Obsmodel}, a communication graph $\mathcal{G}=(\mathcal{V},\mathcal{E})$ with an empty trusted set $\mathcal{T}$, and positive integers $r,q$, does there exist an allocation $\Delta:\mathcal{V}\rightarrow\{1,\ldots,q\}$, such that $\mathcal{G}$ is strongly $(r,\Delta(\cdot),\mathcal{T})$-robust w.r.t. $\mathcal{S}_j$, $\forall \lambda_j \in \Omega_{U}(\mathbf{A})$?
\end{problem}

Let us note that when $q < 3$, the $q$-CSRA problem as stated above boils down to checking whether the given graph $\mathcal{G}$ is strongly $r$-robust w.r.t. $\mathcal{S}_j$, $\forall \lambda_j \in \Omega_{U}(\mathbf{A})$. In \cite{mitra_auto}, by exploiting a connection to the process of bootstrap percolation, it was shown that this can be done in polynomial-time. Thus, the complexity of the $q$-CSRA problem remains to be characterized only when $q\geq3$. In the remainder of this section, we establish that the $3$-CSRA problem is computationally hard by providing a reduction from the NP-complete $3$-Disjoint Set Cover (3-DSC) problem, defined as follows \cite{cardei}.

\begin{definition} (\textbf{3-Disjoint Set Cover (3-DSC)}) Given a collection of elements $\mathcal{U}=\{1, \ldots, p\}$, and a set of subsets $\mathcal{F}=\{\mathcal{F}_1,\ldots,\mathcal{F}_m\}$ of $\mathcal{U}$, can $\mathcal{F}$ be partitioned into three disjoint collections of subsets, such that the union of the subsets within each such collection covers $\mathcal{U}$?
\end{definition}

\begin{theorem} 
The 3-CSRA problem is NP-complete.
\label{thm:3CSRAP}
\end{theorem}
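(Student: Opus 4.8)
The plan is to follow the same two-part template as the proof of Theorem~\ref{thm:TSRAP}: first show $3$-CSRA $\in$ NP, then establish NP-hardness by a polynomial reduction from $3$-DSC. For membership, I would take the color allocation $\Delta:\mathcal{V}\to\{1,2,3\}$ itself as the certificate. By Theorem~\ref{thm:MEDAG_existence}, for each $\lambda_j\in\Omega_U(\mathbf{A})$ the graph is strongly $(r,\Delta(\cdot),\mathcal{T})$-robust w.r.t. $\mathcal{S}_j$ if and only if the MEDAG construction algorithm of Section~\ref{subsec:floc_graph} terminates; since this algorithm (a bootstrap-percolation--type activation process) runs in polynomial time---by the same reasoning invoked in \cite[Proposition~2]{mitra_auto}---a given coloring can be verified efficiently, so $3$-CSRA $\in$ NP.

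For hardness I would reuse the gadget from the TSRA reduction, but now let the coloring (rather than a trusted set) carry the combinatorial content. Given a $3$-DSC instance with universe $\mathcal{U}=\{1,\ldots,p\}$ and subsets $\mathcal{F}=\{\mathcal{F}_1,\ldots,\mathcal{F}_m\}$, I build a scalar unstable system $x[k+1]=\lambda x[k]$ and a graph on $\mathcal{V}=\bar{\mathcal{U}}\cup\bar{\mathcal{F}}$, with $\bar{\mathcal{U}}=\{u_1,\ldots,u_p\}$ and $\bar{\mathcal{F}}=\{f_1,\ldots,f_m\}$, adding an edge $(f_j,u_i)$ whenever $i\in\mathcal{F}_j$. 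Each $f_j$ is given a nonzero scalar measurement and each $u_i$ a zero measurement, so that $\mathcal{S}=\bar{\mathcal{F}}$ and $\mathcal{V}\setminus\mathcal{S}=\bar{\mathcal{U}}$; I then set $q=3$ and, crucially, $r=m+1$. The point of taking $r>m$ is that every element node $u_i$ has all of its neighbors inside $\bar{\mathcal{F}}$, so $\deg(u_i)\le m<r$, which renders the redundancy clause~(i) of Definition~\ref{defn:r_delta_tau_reachable} unsatisfiable for any $\mathcal{C}\subseteq\bar{\mathcal{U}}$. Since $\mathcal{T}=\emptyset$, the trust clause~(iii) is also void, so reachability of any $\mathcal{C}\subseteq\mathcal{V}\setminus\mathcal{S}$ can only be certified through the diversity clause~(ii).

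With redundancy and trust thus neutralized, the equivalence becomes transparent. Because $\mathcal{N}_{u_i}\subseteq\bar{\mathcal{F}}$ is automatically disjoint from any $\mathcal{C}\subseteq\bar{\mathcal{U}}$, a set $\mathcal{C}$ is $(r,\Delta(\cdot),\mathcal{T})$-reachable exactly when some $u_i\in\mathcal{C}$ has three distinctly colored neighbors. Testing the singletons $\mathcal{C}=\{u_i\}$ shows that strong robustness w.r.t. $\mathcal{S}$ forces \emph{every} $u_i$ to see all three colors among its neighbors, and conversely this condition makes every nonempty $\mathcal{C}\subseteq\bar{\mathcal{U}}$ reachable. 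I would then read off the partition: the coloring restricted to $\bar{\mathcal{F}}$ sends each subset $\mathcal{F}_j$ into a single color class $\mathcal{F}^{(c)}=\{\mathcal{F}_j:\Delta(f_j)=c\}$, and ``$u_i$ has a neighbor of color $c$'' is precisely ``$\mathcal{F}^{(c)}$ covers element $i$''. Hence every $u_i$ seeing all three colors is equivalent to each of the three disjoint collections $\mathcal{F}^{(1)},\mathcal{F}^{(2)},\mathcal{F}^{(3)}$ covering $\mathcal{U}$, i.e., a valid $3$-DSC (the colors of the $u_i$ are irrelevant and may be fixed arbitrarily). This gives both directions of the reduction, which is clearly polynomial-time. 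The main obstacle---and the step deserving the most care---is exactly the choice $r=m+1$ together with the verification that redundancy can never substitute for diversity; once that is pinned down, the correspondence between a $3$-coloring of $\bar{\mathcal{F}}$ and a disjoint triple of covers is essentially a relabeling. A minor point worth checking is that any empty color class would leave some element uncovered (as $\mathcal{U}\ne\emptyset$), so a feasible coloring automatically uses all three colors nontrivially.
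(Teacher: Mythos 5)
Your proposal is correct and follows essentially the same approach as the paper: the same NP-membership certificate verified via the MEDAG construction, the same bipartite gadget with a scalar unstable mode so that $\mathcal{S}=\bar{\mathcal{F}}$, and the same reduction from $3$-DSC in which color classes of $\bar{\mathcal{F}}$ correspond to the three disjoint covers. The only substantive difference is your choice $r=m+1$ versus the paper's $r=|\mathcal{F}|=m$; your choice makes the redundancy clause categorically unsatisfiable for nodes in $\bar{\mathcal{U}}$ and thereby lets you skip the paper's extra case analysis (colorings using fewer than three colors, and the count $|\mathcal{N}_{u_i}|<r$ when some color class is missed), while both parameter choices yield a valid reduction.
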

\begin{proof} 
The fact that CSRA $\in$ NP follows an analogous argument as in Theorem \ref{thm:TSRAP}. In particular, given any ``yes" instance of the problem, the associated allocation $\Delta$ yields a certificate w.r.t. the MEDAG construction algorithm in Section \ref{subsec:floc_graph} that acts as a polynomial-time verifier. 

Given an instance of  3-DSC, we construct an instance of 3-CSRA in a manner identical to that in the proof of Theorem \ref{thm:TSRAP}, and adhere to the notation used in that proof. Note however that unlike TSRA, the cardinality $t$ of the trusted set $\mathcal{T}$ plays no role in 3-CSRA, and hence requires no specification while constructing the instance of 3-CSRA. It is easy to see that given any instance of 3-DSC, the above 3-CSRA instance can be constructed in polynomial-time. We now argue that the answer to any given instance of 3-DSC is ``yes" if and only if the answer to the constructed instance of 3-CSRA is ``yes". Throughout the proof, we will assume that $|\mathcal{F}|\geq 3$, as otherwise, the answer to 3-DSC is trivially ``no". 

Suppose the answer to the 3-DSC instance is ``yes". Thus, $\mathcal{F}$ can be partitioned into 3 disjoint set covers of $\mathcal{U}$. Let these partitions be denoted ${\mathcal{P}}_1=\{\mathcal{F}_{i_1},\ldots,\mathcal{F}_{i_{p_1}}\}$, ${\mathcal{P}}_2=\{\mathcal{F}_{j_1},\ldots,\mathcal{F}_{j_{p_2}}\}$, and ${\mathcal{P}}_3=\{\mathcal{F}_{k_1},\ldots,\mathcal{F}_{k_{p_3}}\}$, where $p_i=|\mathcal{P}_i|,i\in\{1,2,3\}.$ Let the corresponding sets of nodes in $\bar{\mathcal{F}}$ be denoted $\bar{\mathcal{P}}_1$, $\bar{\mathcal{P}}_2$ and $\bar{\mathcal{P}}_3$. Consider the following allocation of colors to the nodes in $\bar{\mathcal{F}}$ : $\Delta(f_{i_s})=1, \forall i_s\in\bar{\mathcal{P}}_1$, $\Delta(f_{j_s})=2, \forall j_s\in\bar{\mathcal{P}}_2$, and $\Delta(f_{k_s})=3, \forall k_s\in\bar{\mathcal{P}}_3$. The assignment of colors to the nodes in $\bar{\mathcal{U}}$ is arbitrary, i.e., each $u_i\in\bar{\mathcal{U}}$ is assigned any one of the three colors. Noting that the set of source nodes $\mathcal{S}$ is precisely the set $\bar{\mathcal{F}}$, we claim that $\mathcal{G}$ is strongly $(r,\Delta(\cdot),\mathcal{T})$-robust w.r.t. $\mathcal{S}$. To see this, pick any non-empty subset $\mathcal{C}\subseteq\mathcal{V}\setminus\mathcal{S}=\bar{\mathcal{U}}$. Since ${\mathcal{P}}_1$, ${\mathcal{P}}_2$ and ${\mathcal{P}}_3$ each cover $\mathcal{U}$, it follows that every $u_i\in\bar{\mathcal{U}}$ has a neighbor in each of the sets $\bar{\mathcal{P}}_1$, $\bar{\mathcal{P}}_2$ and $\bar{\mathcal{P}}_3$, i.e., each $u_i\in\bar{\mathcal{U}}$ has 3 distinct colored neighbors. Thus, $\mathcal{C}$ is $(r,\Delta(\cdot),\mathcal{T})$-reachable, and the answer to the constructed instance of 3-CSRA is ``yes". 

We now establish the converse. Suppose the answer to the 3-DSC instance is ``no". In other words, no matter how one partitions $\mathcal{F}$ into 3 disjoint collections of subsets, not all three such collections can each cover $\mathcal{U}$. We first argue that $\mathcal{G}$ cannot be made strongly $(r,\Delta(\cdot),\mathcal{T})$-robust w.r.t. $\mathcal{S}$, if one uses fewer than three colors to color the set $\bar{\mathcal{F}}$. To see this, note that if fewer than three colors are used to color $\bar{\mathcal{F}}$, then $\mathcal{G}$ will be strongly $(r,\Delta(\cdot),\mathcal{T})$-robust w.r.t. $\mathcal{S}$ if and only if each $f_j\in\bar{\mathcal{F}}$ is a neighbor of every $u_i\in\bar{\mathcal{U}}$, since each $u_i$ would need to have precisely $r=|\bar{\mathcal{F}}|$ neighbors to meet the $(r,\Delta(\cdot),\mathcal{T})$-reachability requirement (recall that $\mathcal{T}=\emptyset$). However, that would imply $\mathcal{F}_j=\mathcal{U}, \forall \mathcal{F}_j\in\mathcal{F}$. This in turn would collapse the size of the set $\mathcal{F}$ to just 1 (since all its elements would be identical), contradicting the fact that $|\mathcal{F}| \geq 3$. 

Next, consider any allocation of these colors to the nodes in $\bar{\mathcal{F}}$, where each of the three colors is used at least once. Such a coloring naturally partitions $\bar{\mathcal{F}}$ into 3 disjoint non-empty sets, say $\bar{\mathcal{P}}_1$, $\bar{\mathcal{P}}_2$ and $\bar{\mathcal{P}}_3$. Since the answer to 3-DSC is ``no", there must exist some node $u_i\in\bar{\mathcal{U}}=\mathcal{V}\setminus\mathcal{S}$, such that $u_i$ contains at most 2 distinct colored neighbors from $\bar{\mathcal{F}}$. Since $\bar{\mathcal{P}}_1$, $\bar{\mathcal{P}}_2$ and $\bar{\mathcal{P}}_3$ are each non-empty, and $r=|\bar{\mathcal{F}}|=|\bar{\mathcal{P}}_1|+|\bar{\mathcal{P}}_2|+|\bar{\mathcal{P}}_3|$, it follows that $|\mathcal{N}_{u_i}| < r$. Consequently, $\{u_i\}\in\mathcal{V}\setminus\mathcal{S}$ is not $(r,\Delta(\cdot),\mathcal{T})$-reachable. Based on the above discussion, we conclude that there does not exist any allocation $\Delta:\mathcal{V}\rightarrow\{1,2,3\}$ that renders $\mathcal{G}$ strongly $(r,\Delta(\cdot),\mathcal{T})$-robust w.r.t. $\mathcal{S}$. The answer to the constructed instance of $3$-CSRA is thus ``no". This completes the proof.
\end{proof}

At the moment, we do not have a clean heuristic algorithm to allocate diversity; we reserve this as future work. 
\section{Conclusion}
We introduced novel graph-theoretic constructs to study the impacts of redundancy, diversity, and trust in the context of resilient distributed state estimation. We then proposed an attack-resilient algorithm that appropriately leverages each of the three above facets, and provides provable guarantees. Roughly speaking, we established that even relatively sparse networks that are either diverse, or contain a small subset of trusted nodes, can exhibit the same functional robustness as densely connected networks. Finally, we separately studied the complexity of (i) selecting a trusted node set, and (ii) allocating diversity, in order to achieve a prescribed level of robustness. Our analysis revealed that each such problem is NP-complete; in the future, we plan to explore approximation algorithms with provable guarantees for each of these problems.
\bibliographystyle{IEEEtran} 
\bibliography{refs}
\end{document}